\documentclass[12pt]{amsart}
\usepackage[top=1.25in,bottom=1.25in,left=1.25in,right=1.25in]{geometry}
\usepackage[utf8]{inputenc} 
\usepackage{amsmath}
\usepackage{amssymb}
\usepackage{amsthm}
\usepackage{bbm}
\usepackage{graphicx}
\usepackage{float}
\usepackage{enumerate}
\usepackage{natbib}
\usepackage{multicol}
\setlength{\bibsep}{0.0pt}
\usepackage{setspace,lipsum}
\usepackage{bm}
\usepackage{booktabs}
\usepackage[table]{xcolor}
\usepackage{float}

\usepackage{mathrsfs}
\usepackage{comment}
\usepackage{apptools}
\usepackage{url}
\newcommand{\RN}[1]{  \textup{\uppercase\expandafter{\romannumeral#1}}}
\providecommand{\eps}{\varepsilon}
\usepackage{booktabs}

\usepackage[subrefformat=parens]{subcaption}
\usepackage{mathtools}

\newtheorem{proposition}{Proposition}
\newtheorem{theorem}{Theorem}
\newtheorem{corollary}{Corollary}
\newtheorem{lemma}{Lemma}

\theoremstyle{remark}
\theoremstyle{remark}

\newtheorem{definition}{Definition}

\usepackage{etoolbox}
\makeatletter
\patchcmd{\@maketitle}
  {\ifx\@empty\@dedicatory}
  {\ifx\@empty\@date \else {\vskip3ex \centering\footnotesize\@date\par\vskip1ex}\fi
   \ifx\@empty\@dedicatory}
  {}{}
\patchcmd{\@adminfootnotes}
  {\ifx\@empty\@date\else \@footnotetext{\@setdate}\fi}
  {}{}{}
\makeatother

\makeatother

\AtAppendix{
\numberwithin{equation}{section}
\numberwithin{definition}{section}
\numberwithin{theorem}{section}
\numberwithin{proposition}{section}
\numberwithin{lemma}{section}
\numberwithin{corollary}{section}}

\DeclareMathOperator*{\argmin}{arg\,min}
\newcommand{\indep}{\perp \!\!\! \perp}

\usepackage{amsthm}
\newtheorem{example}{Example}[section]

\linespread{1.6}
\allowdisplaybreaks
\title{Artificial Intelligence Clones}
\author[Annie Liang]{Annie Liang$^*$}
\thanks{The author thanks Keaton Ellis, Drew Fudenberg, Andrei Iakovlev, David Levine,  Alexandru Lupsasca, Suproteem Sarkar, and Eduard Talamas for helpful comments, and National Science Foundation Grant SES-2145352 for financial support.}
\thanks{$^*$Northwestern University}
\date{\today}

\begin{document}

\maketitle

\begin{abstract}
Large language models, trained on personal data, are increasingly able to  mimic individual personalities. These ``AI clones'' or ``AI agents'' have the potential to transform how people search for matches in contexts ranging from marriage to employment. This paper presents a theoretical framework to study the tradeoff between the substantially expanded search capacity of AI representations and their imperfect representation of humans.  An individual's personality is modeled as a  point in $k$-dimensional Euclidean space, and an individual's AI representation is modeled as a noisy approximation of that personality. I compare two search regimes: Under in person search, each person randomly meets some number of individuals and matches to the most compatible among them; under AI-mediated search, individuals match to the person with the most compatible AI representation. I show that a finite number of in-person encounters yields a better expected match than search over infinite AI representations. Moreover, when personality is sufficiently high-dimensional, simply meeting two people in person is more effective than search on an AI platform, regardless of the size of its candidate pool. 

\end{abstract}

\section{Introduction}

Recent advances in large language models have brought us closer to a world in which individuals are represented by ``AI clones'' or ``AI delegates'' trained on their personal data \citep{Park2024GenerativeAgentSimulations}.  This technology has the potential to transform  how we search and match over human candidates, particularly in settings where direct engagement is costly. Labor markets are already moving in this direction: Roughly 90\% of employers already use an automated system to screen candidates \citep{amitabh2025aihiring}, with four in ten employers reporting in a survey that they intend to use AI to ``talk to'' candidates \citep{demopoulos2024jobapplicants}. These changes are facilitated by new AI platforms: Replicant enables job candidates to create digital representatives of themselves (see Figure \ref{fig:Replicant}), while Ribbon AI and Alex enable firms to automate their job recruiter \citep{rocha2025ai,temkin2025alex}. A similar shift is underway for dating markets. Bumble's founder has described a future in which ``your dating concierge could go and date for you with other dating concierges'' \citep{ForbesAustralia2024BumbleAIConcierge}, and platforms such as Volar Dating and Teaser AI have already adopted business models explicitly oriented around this approach \citep{Metz2024AIDating,lin2024grindr}.\footnote{This idea also appears in the Black Mirror episode ``Hang the DJ,'' in which the characters are revealed to be digital copies running simulations to gauge compatibility for their real-life counterparts.}$^,$\footnote{The rich medium of unstructured conversation allows AI representations to assess compatibility across many personality traits, in contrast to traditional online dating platforms which typically represent individuals by a limited set of attributes.} 

\begin{figure}[h]
    \centering
    \includegraphics[scale=0.2]{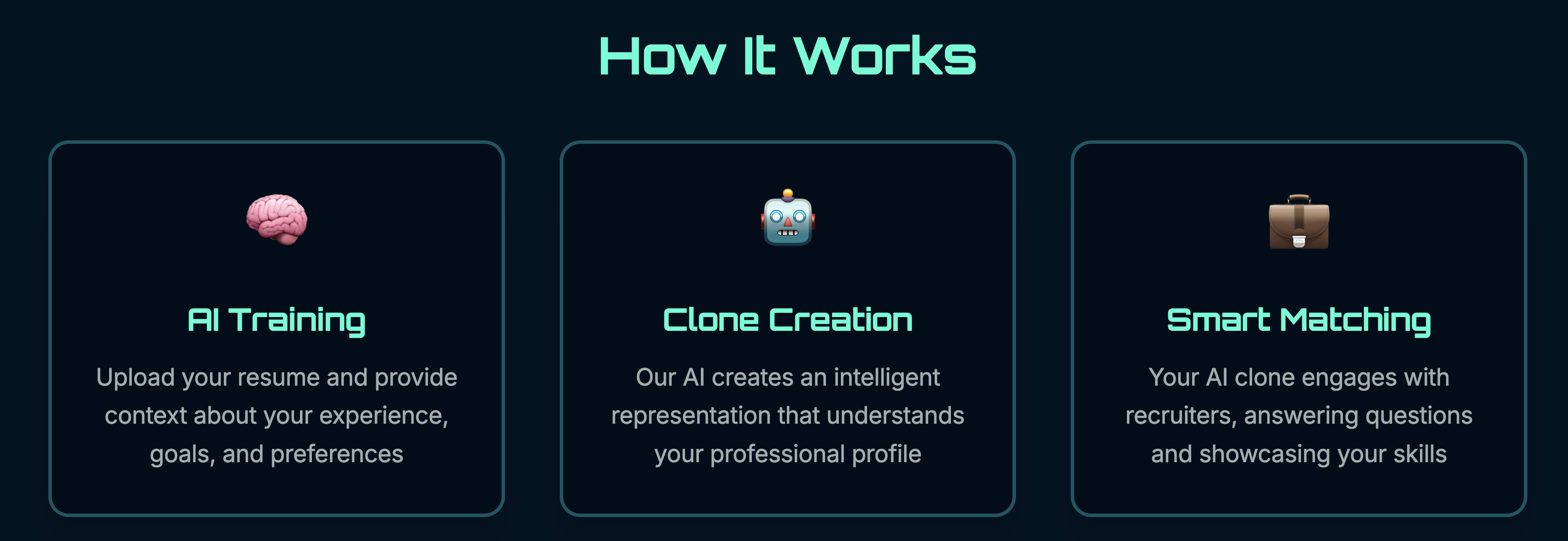}
    \caption{Replicant helps job candidates create AI clones to represent them in conversations with recruiters and potential employees. Source: Replicant website \citep{Replicant2024}.}
    \label{fig:Replicant}
\end{figure}

\noindent Across these domains, the appeal of AI-mediated matching is \emph{scale}: AI representations enable search over far more candidates than a person could feasibly meet. But representations of complex human personalities based on limited training data are necessarily imperfect. Thus, there is a tradeoff between the platform's improved scale of search and its reduced fidelity of representation.

This paper develops a theoretical framework to study this tradeoff. Individuals are modeled as points in $k$-dimensional Euclidean space, and AI clones as noisy representations of those points. I ask when search conducted over AI proxies can outperform direct human evaluation. The answer turns out to crucially depend on the dimensionality of the search space. When sufficiently many attributes are relevant to match compatibility ($k$ is large), the value of AI-mediated search is extremely low: participating on an AI platform, no matter how large its candidate pool, is no better than meeting two people in person.

The paper then turns to the broader social implications of use of AI representations for search. Specifically, I consider an extension in which the AI platform is able to develop more accurate AI representations for individuals in a ``data-rich'' group compared to a ``data-poor'' group. Despite the two groups being otherwise identical, Subject is matched to a data-rich individual with probability exceeding one-half. This probability moreover converges to 1 as the number of dimensions grows large. These results points to a potential new form of social stratification, where an individual's outcomes depend not just on their intrinsic characteristics, but also on how effectively artificial intelligence can understand and represent those characteristics.

Section \ref{sec:Model} describes the model. Subject's target match is normalized to the origin, with potential matches drawn uniformly at random from the unit ball. Subject prefers to be matched to an individual closer to their target, and Subject's payoff is decreasing in the Euclidean distance between the target and the match. I compare two search regimes: Under in-person search, Subject randomly meets $m$ individuals and matches to the closest among them. Under AI-mediated search, all individuals are represented by noisy perturbations of their true personality vector, henceforth their \emph{AI representations}.  Subject is matched to the individual whose AI representation is closest to the representation of Subject's target, but Subject's payoffs are determined by the true distance between this individual and the target.

This model is inspired by AI representations but applies broadly to search problems with two defining features. First, match quality depends on many attributes that are difficult to measure. Second, human evaluators have an intuitive and holistic understanding of match quality that does not require explicit assessment of those underlying attributes. For example, a recruiter can know that a candidate interviewed well without having precisely assessed the candidate on component dimensions. In such environments, automated search is constrained by imperfect representations of the candidates and the target, while human search is constrained by the cost of direct engagement. The central tradeoff is between the errors of automated search and the capacity constraints of human evaluation.

My results focus on a quantity I call the \emph{AI-equivalent sample size}, which measures the value of the AI platform. The AI-equivalent sample size is the smallest $m$ such that if Subject is permitted $m$ draws in the in-person regime, then  \emph{no} search size advantage on the AI platform can compensate for the error introduced by the AI representations. That is, Subject would rather meet $m$ individuals in person than search over an infinite number of individuals on the AI platform.

Section \ref{sec:MainResults} presents the main results. The first result, Proposition \ref{prop:Finite}, says that the AI-equivalent sample size is always finite. In other words, for any level of AI approximation error, there exists a finite number of in-person encounters sufficient to outperform any search size advantage offered by the AI platform. To show this, I demonstrate an intuitive but novel monotone-likelihood ratio property: smaller Euclidean distances between AI representations are associated with smaller Euclidean distances between the actual points. Thus the best-case match on the AI platform is an individual whose AI representation perfectly matches the representation of the target.  But because these representations imperfectly represent the underlying points, the expected actual distance between them is bounded away from zero. In contrast, under in-person search, the best of $m$ draws will eventually approach a perfect match, and thus sufficiently many draws must outperform the AI platform.
 
My second main result, Theorem \ref{thm:2}, says that when the number of dimensions of personality is large, then the AI-equivalent sample size is simply two. That is, in-person search yields a higher expected payoff so long as Subject searches over at least two people.

This theorem is a consequence of counterintuitive properties of high dimensional geometry: As the number of dimensions grows large, random points in the unit ball isolate away from one another and amass near the boundary. In the context of my model, this means that each individual becomes increasingly unique in the space of personalities. While the scarcity of good matches raises the value of searching in a large pool, the best match identified on the AI platform is increasingly driven by noise in the representation rather than true compatibility.  Theorem \ref{thm:2} shows that the breakdown of the platform's selection mechanism outweighs the benefits of its expanded search.

The final part of the paper considers an extension in which the AI platform has access to different quantities of data for different individuals.  Individuals are categorized into two groups, a ``data-rich'' and a ``data-poor'' group, which are differentiated by how well their AI representations mimic them. Despite the groups being  otherwise identical in characteristics, I show that Subject is matched to a data-rich individual with probability exceeding one-half. Moreover, this probability converges to 1 as either the number of dimensions, or the disparity in estimation errors, grows large. These results suggests that use of AI representations may systematically advantage individuals that the AI platform represents more accurately.

The remainder of the paper is organized as follows. Section \ref{sec:Model} describes the model. Section \ref{sec:MainResults} presents the main results. Section \ref{sec:Supplementary} presents the extension to heterogeneous individuals, and Section \ref{sec:Conclusion} concludes.

\subsection{Technological Context}
 \label{sec:TechContext}
 This paper is motivated by recent advances in artificial intelligence that make it possible for algorithms to function as personalized representations of specific individuals. Two developments are central to this shift. First, early evidence suggests that large language models reproduce general patterns of human conversation, and can be further adapted to a particular individual’s style when fine-tuned on their personal data. Second, unlike earlier machine-learning systems that were trained for narrowly defined prediction tasks (such as predicting medical conditions from an image scan), large language models  can engage in open-ended interaction both with humans and with one another. Together, these developments make large-scale automated search over human candidates technologically plausible. They are discussed in further detail below.

\subsubsection{Digital Representations of Specific Individuals}

Large language models are trained on large-scale corpora of text and other media drawn from the internet and related sources. While the first wave of such models adopted generic ``helpful assistant'' personas,\footnote{There is evidence that these systems already reproduce a wide range of human tendencies, preferences, and behavioral regularities \citep{pmlr-v202-aher23a,Mei2024Turing,Horton2023HomoSilicus}.} researchers and firms are increasingly exploring the possibility that such models can be fine-tuned on personal data---such as written text, social media posts, and recorded speech and video---to produce systems that mimic stable features of a \emph{specific} individual’s linguistic style, preferences, and judgment patterns.

Several firms already build and market such systems, often described as ``AI clones'' or ``digital twins'' of individuals. For example, Delphi AI enables public-facing figures to deploy AI agents that interact with audiences in their distinctive style; Replicant trains AI representatives for job candidates to participate in preliminary screening or interviews on their behalf; and Personal AI allows users to use their written communications to train a system that can, for example, write emails in their voice.

These generative agents are imperfect, but there are indications that they can already approximate some stable individual preferences and decision patterns with meaningful fidelity. For example, \citet{Park2024GenerativeAgentSimulations} find that LLM-based agents trained on short personal interviews reproduce individuals' General Social Survey survey responses and experimental choices with accuracy close to human test–retest reliability. As models improve and richer personal data become available, it is likely that such representations will continue to improve while never achieving perfect accuracy.

\subsubsection{Open-Ended Interaction and Scalable Search}
Large language models are capable of open-ended communication both with humans and with each other. For example, AI agents can interview one another, simulate collaboration, or engage in  conversation on behalf of the individuals they represent. The economic significance of these developments is that interaction itself becomes scalable. Platforms such as Ribbon AI and related `talent twin''systems are using conversational agents to conduct preliminary job interviews, and dating services including Volar Dating and Teaser AI are experimenting with AI-mediated exchanges in which users' agents converse to assess compatibility. Evaluation on these platforms is no longer constrained by human time or attention, allowing them to search across vastly larger pools of potential matches.

At the same time, digital representations are inevitably imperfect, and their fidelity may vary systematically across individuals as a function of data availability, online activity, and willingness to share personal information. Section \ref{sec:MainResults} establishes baseline limitations on AI-mediated search as a consequence of this imperfect representation.  Section \ref{sec:Supplementary} subsequently studies environments in which representation quality is heterogeneous and examines how such asymmetries shape outcomes.

\subsection{Related Literature}

This paper contributes to a growing literature on the social implications of artificial intelligence (AI), in particular to research comparing human and AI evaluation. AI predictions have been shown to outperform human experts across various prediction problems \citep{Kleinbergetal2017,rajpurkar2017chexnet,jung2017simple,Angelovaetal,Agarwaletal2023}. These papers all consider settings---such as medical diagnosis---where humans possess limited intuitive knowledge about the underlying decision problem.  By contrast, the present paper is motivated by settings where human actors possess a naturally rich understanding of a complex, subjective objective. This is a novel consideration relative to the literature, and the paper arrives at a very different conclusion regarding algorithmic versus human evaluation. 

There is an emerging literature on AI agents. This paper complements recent empirical work about the effectiveness of AI agents (such as \citet{JabarianHenkel2025VoiceAI} and \citet{Sarkar2025AIAgentsProductivity}) with a framework for studying the welfare implications of deploying AI agents in search.

Within economic theory, this work contributes to the rich literature on search, which has explored classic questions including how long to search for \citep{mccall1970economics,stigler1961economics,weitzman1979optimal}, what speed to search at \citep{UrgunYariv2024}, and where to search \citep{callander2011searching,Malladi2023}.\footnote{In related strategic experimentation models  (e.g., \citet{BoltonHarris1999} and \citet{KellerRadyCripps2005}), agents face uncertain payoffs and learn by repeatedly sampling an action---a feature absent from this paper.}  The present paper focuses on a new question regarding the role of the complexity of the search space, as measured by the number of attributes \citep{KLABJAN2014190}.  I show that dimensionality fundamentally alters optimal search behavior when search is conducted with error. This result differs from, for example, \citet{Bardhi} and \citet{Malladi2023}, who show that their characterizations of optimal search \emph{extend} but are not qualitatively changed in multiple dimensions.\footnote{In \citet{Bardhi} and and \citet{Malladi2023}, the searcher samples without error, and the focus is on inference about an unknown mapping from attributes to payoffs.}

This paper's analysis of the impact of complexity (as measured by the number of dimensions) relates to \citet{Ely2011} and \citet{FudenbergLevine2022}. \citet{Ely2011} shows that as an optimization space grows in complexity, decision-makers resort to incremental adaptations (``kludges'') which generate persistent inefficiencies. \citet{FudenbergLevine2022} examine how a partially-informed agent responds to systemic shocks, and show that the optimal intervention shrinks to zero as the number of dimensions grows large. My finding that higher dimensionality imposes fundamental limits on search aligns with these papers' high-level insights.

Key to Theorem \ref{thm:2} is a comparison of asymptotics as the dimensionality of the search space grows large. Rate of convergence results have a long history in economic theory, but typically involve limits in the quantity of information \citep{Vives,MoscariniSmith2002,LiangMu2020,FrickIijimaIshii2024} or the size of a population \citep{Harel2021,DasarathaGolubHak2023}. \citet{IakovlevLiang} consider a similar asymptotic to the present paper (namely, as the number of attributes grows large) but characterize limiting beliefs rather than the limiting value of search.

Finally, while this paper adopts the narrative of a dating platform, it diverges considerably both from the classic matching frameworks \citep{GaleShapley1962,RothSotomayor1992}, which examine how centralized mechanisms can achieve stable or efficient outcomes for  populations, and from the search and matching literature \citep{ShimerSmith2000,ChadeEeckhoutSmith2017}, which analyzes macroeconomic outcomes and equilibrium behavior in decentralized markets with many searchers. This paper instead focuses on the decision problem of a single agent navigating a high-dimensional search space.

\section{Model} \label{sec:Model}

\subsection{Setting} \label{sec:Setting}
Individuals are represented by vectors in $\mathbb{R}^k$, where each coordinate describes a  characteristic.\footnote{Example characteristics include intellectual curiosity, emotional sensitivity, relationship to time, relationship to authority, relationship to religion, energy bandwidth, style of humor, style of decision-making, and speed of decision-making.} A focal individual, hereafter \emph{Subject}, is searching for a match, which might be a spouse or an employee. Subject's ideal match or \emph{target} is normalized to the origin $x_0:=(0,0,\dots,0)$.  Potential matches $i \geq 1$ have characteristic vectors $X_i \in \mathbb{R}^k$ drawn uniformly at random from the $k$-dimensional unit ball $B_k \equiv \{ x \in \mathbb{R}^k : \| x \| \leq 1 \}$, which proxies for a first-round screen where individuals who are too far from Subject's target are not considered. Subject's utility from matching with individual $i$ is  \[u(X_i) = -\|X_i-x_0\| = -\|X_i\|,\] or the negative of the Euclidean distance between individual $i$ and the target.

There are two search regimes. Under \emph{in-person search}, Subject randomly meets $m$ individuals and matches to the most compatible among them.\footnote{The main model abstracts away from  whether the other individual would also agree to such a match; see Section \ref{sec:Discussion} for discussion.} The expected distance between the target and this individual is
\[d^{\text{IP}}_k(m) \equiv \mathbb{E}\left[\min_{1 \leq i \leq m} \| X_i \| \right].\]

Under \emph{AI-mediated search}, individuals do not interact in person but instead participate on a platform that uses  ``generators'' to stochastically represent the candidates and the target. Specifically, in the interaction between Subject and candidate $i$, Subject's target is represented by
\[Y_{0i} = x_0 + \varepsilon_{0i},\] 
and individual $i$ is represented by \[Y_{i0} = X_i + \varepsilon_{i0},\]
where $\varepsilon_{0i},\varepsilon_{i0} \sim \mathcal{N}(0, \sigma^2 I_k)$ are multivariate Gaussian noise terms that are independent of each other and across $i$. (The notation $I_k$ denotes the identity matrix in $k$ dimensions, and I assume throughout that $\sigma^2>0$.)  Figure \ref{fig:depiction} depicts this model in three dimensions, i.e., $k=3$.

\begin{figure}[h]
\begin{center}
    \includegraphics[scale=0.3]{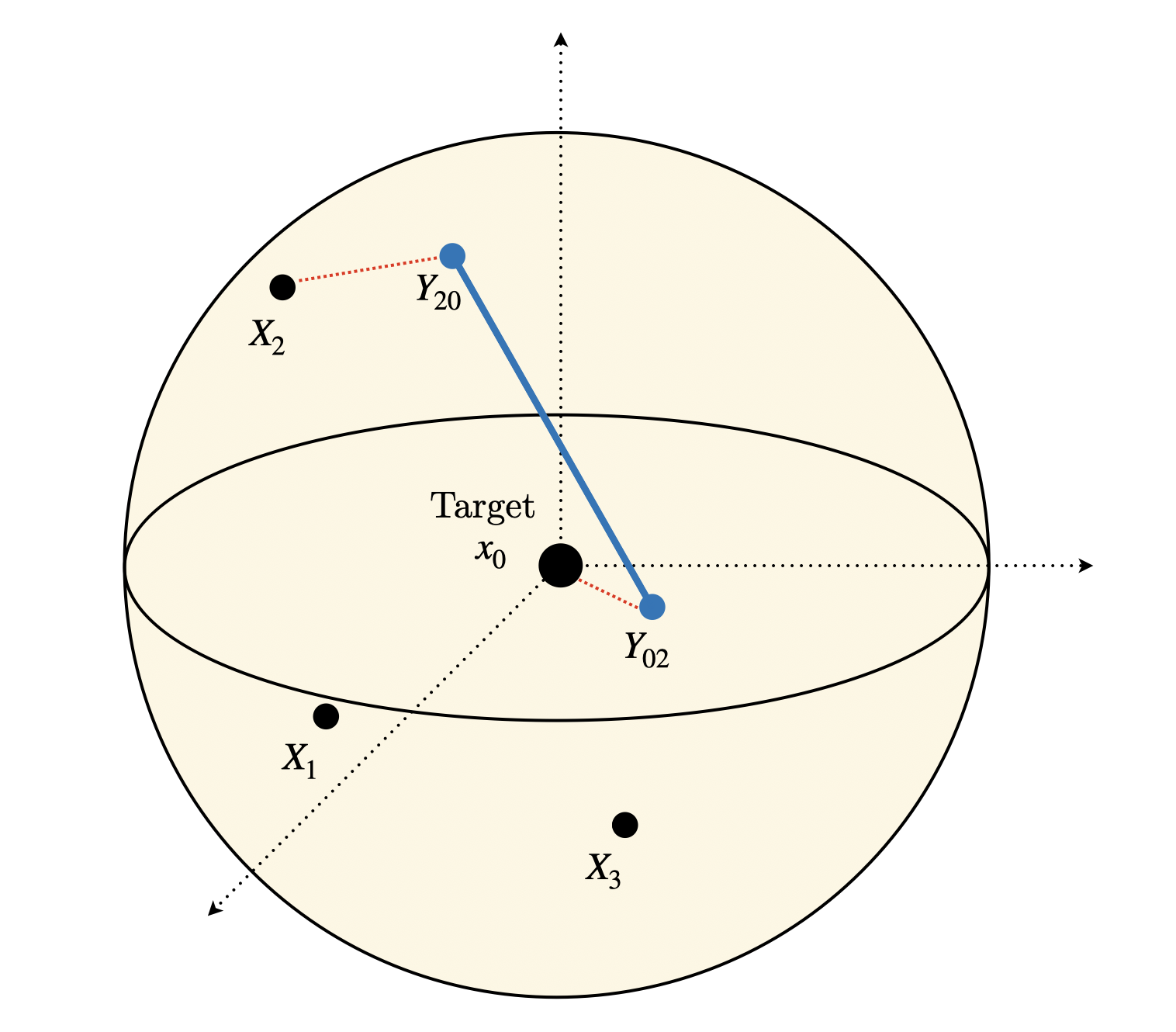}
\end{center}
\caption{Rather than observing $\|X_2-x_0\|$, the true distance between individual $i=2$ and the target, the platform observes $\|Y_{20}-Y_{02}\|$, the distance between their representations.} \label{fig:depiction} 
\end{figure}

There are $n$ candidates on the platform. For each candidate $i$, the platform observes $\|Y_{0i} - Y_{i0}\|$, the distance between the representation of the target and the representation of candidate $i$. The platform recommends the candidate whose representation is closest to the target representation:\footnote{This is the Bayesian-optimal selection for Subject given the platform's information, i.e., $i^*_n = \argmin_{1 \leq i \leq n} \mathbb{E}\left[\|X_i\| : \{\|Y_{0i} - Y_{i0}\|\}_{i=1}^n\right].$ See Lemma \ref{lemm:CondExp} for details.}
\[
i^*_n = \argmin_{1 \leq i \leq n} \|Y_{0i} - Y_{i0}\|.
\]
Crucially, although the selected candidate has the most compatible  \emph{representation}, Subject's payoffs are determined by the actual distance between Subject's target and this individual. For each sample size $n$, define
\[d^{\text{AI}}_{k}(n) \equiv \mathbb{E}\!\left[\|X_{i^*_n}\|\right]\]
to be the expected distance between the match $i^*_n$ and the target, in analogy to the previous $d^{\text{IP}}_k(m)$.

All results in this paper extend for two variations of this model. First, the representations may be fixed, so that $Y_{0}$ and $Y_{i}$ are generated once and fixed in each pairwise interaction. Second, AI representations may be employed on only one side of the market, i.e., only by the searcher or by the candidates. In each of these cases, the analysis extends with  minor modifications.\footnote{The first variation affects only the proofs of the results in Section \ref{sec:Supplementary}, which extend when we first condition on the common confounder $Y_{0}$. The second variation is handled by reducing the variance of the noise term when we write $Y_{i0}-Y_{0i}$ as a signal about $X_i-x_0$.}

\subsection{Value of AI-Mediated Search}
My main results characterize the following measure.

\begin{definition} Fix any number of dimensions $k$. The \emph{AI-equivalent sample size} $m^*_k$ is the smallest $m \in \mathbb{Z}_+$ satisfying 
\[d^{\text{IP}}_k(m)< d^{\text{AI}}_k(n) \quad \forall n \in \mathbb{Z}_+.\]
If the inequality above is not satisfied for any finite $m$, then $m^*_k=\infty$.
\end{definition}

The AI-equivalent sample size is the smallest integer $m$ such that searching over $m$ individuals in person leads (in expectation) to a better match than participating on the AI platform, \emph{no matter} the size of the platform's candidate pool. It serves as a quantitative benchmark for the value of AI-mediated search: An infinite AI-equivalent sample size means that for every $m$, one can choose $n$ large enough such that search on an AI platform with $n$ candidates outperforms in-person search over $m$ candidates. A finite AI-equivalent sample size means that the advantage of the AI platform is fundamentally capped, and a sufficiently thorough in-person search will outperform it no matter its size. In this case, the practical value of the platform depends on the size of $m_k^*$.

\subsection{Interpretation of Model}

The proposed framework applies to environments in which AI representations are already beginning to be deployed (Section \ref{sec:TechContext}), such as the following.

\begin{example}[Hiring]
A firm seeks to hire a suitable candidate. Under in-person search, a hiring manager interviews a small number of applicants. Under AI-mediated search, an AI recruiter interviews the AI representations of a substantially larger number of candidates. As discussed earlier, all results extend if an AI representation is only used on one side of this market, i.e., if an AI recruiter interviews humans or if a human recruiter interviews AI representations of candidates.
\end{example}

\begin{example}[Dating]
An individual seeks a partner. Under in-person search, the individual chooses the (truly) most compatible from a limited number $m$ of candidates. Under AI-mediated search, a platform first seeks to understand the individual's ideal match, and then evaluates a potentially much larger set of candidates $n \gg m$ relative to its understanding of the searcher's target. In the special case where the searcher prefers someone with similar characteristics, we can interpret the target representation as a representation of the searcher themselves.
\end{example}

The framework also extends naturally to settings where AI representations of individuals are not yet in use but plausibly will be, such as clients seeking lawyers, founders seeking investors, or  families seeking care providers.

Finally, although this paper is motivated by emerging uses of AI representations, the framework applies more broadly to search problems with two key features. First, match quality depends on many attributes that are individually difficult to measure. Second, the human searcher does not need to explicitly identify or evaluate each relevant attribute, but instead has an intuitive grasp of what constitutes a better or worse outcome.\footnote{Reinforcement learning with human feedback \citep{christiano2017deep} emerged as a crucial innovation for large language models precisely because it provides a way to train systems on judgments of this kind.} For example, a hiring manager can judge that an interview went poorly without having explicitly evaluated the candidate's communication style, problem-solving approach, or cultural fit. An AI platform, by contrast, lacks direct access to this holistic evaluation and must instead infer quality from noisy measurements of component attributes.\footnote{In the model, the assumption $\sigma^2>0$ captures this imperfect measurement and rules out settings with objectively observable attributes.} These features appear in other search and delegation problems as well, such as the following.

\begin{example}[Purchasing a Home] \label{ex:Home}
A family seeks to purchase a home. Under in-person search, the family visits \(m\) houses directly. Alternatively, the family imperfectly communicates its preferences to a system that evaluates a larger set \(n\) of homes relative to its understanding of the target. 
\end{example}

\section{Main Results} \label{sec:MainResults}

Section \ref{sec:Results} presents two results about the limits of AI-mediated search. Section \ref{sec:ProofSketch} outlines the proof of these results. Section \ref{sec:Discussion} discusses possible extensions.

\subsection{Limits of AI-Mediated Search} \label{sec:Results}
My first result says that the value of the AI representation regime is inherently capped: For every number of dimensions $k$, there is some finite number of in-person encounters that yields a better expected match than the AI platform, no matter how large its candidate pool.

\begin{proposition} \label{prop:Finite} For every number of dimensions $k \in \mathbb{Z}_+$, the AI-equivalent sample size $m^*_k$ is finite.   
\end{proposition}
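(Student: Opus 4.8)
The plan is to sandwich the two regimes around an intermediate quantity. First I would show that the in-person distance $d^{\text{IP}}_k(m)$ can be driven to zero by taking $m$ large, while the AI distance $d^{\text{AI}}_k(n)$ is bounded below, \emph{uniformly in} $n$, by a strictly positive constant $d^{\text{AI}}_k(\infty)$. Finiteness of $m^*_k$ is then immediate: any $m$ with $d^{\text{IP}}_k(m) < d^{\text{AI}}_k(\infty)$ satisfies $d^{\text{IP}}_k(m) < d^{\text{AI}}_k(\infty) \le d^{\text{AI}}_k(n)$ for every candidate-pool size $n$, so the smallest such $m$ is finite.

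For the uniform lower bound, write $T_i = \|Y_{0i} - Y_{i0}\|$ for the clone distance and $R_i = \|X_i\|$ for the true distance, so the pairs $(T_i,R_i)$ are i.i.d.\ across $i$ and the platform selects $i^*_n = \argmin_i T_i$ using only the $T_i$'s. Let $g(t) = \mathbb{E}[R_i \mid T_i = t]$. The crucial input is the monotone-likelihood-ratio property (Lemma \ref{lemm:CondExp}): the conditional law of the true distance is stochastically increasing in the clone distance, so $g$ is nondecreasing and in particular $g(t) \ge g(0)$ for all $t \ge 0$. Because selection depends only on the $T_i$'s and the draws are i.i.d., conditioning on the realized minimum $T_{i^*_n}=t$ leaves $R_{i^*_n}$ distributed as $R \mid T = t$, so $\mathbb{E}[R_{i^*_n} \mid T_{i^*_n}=t] = g(t)$. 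Averaging over $t \ge 0$ then yields
\[d^{\text{AI}}_k(n) = \mathbb{E}\big[g(T_{i^*_n})\big] \ge g(0) = d^{\text{AI}}_k(\infty) \equiv \mathbb{E}\big(\|X_i\| : \|Y_{i0} - Y_{0i}\| = 0\big),\]
a bound valid for every $n$.

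It remains to check positivity of this bound and the vanishing of the in-person distance. Conditioning on a perfect clone match $\|Y_{i0}-Y_{0i}\|=0$ is equivalent to conditioning on $Z_i = -X_i$, where $Z_i = \varepsilon_{i0}-\varepsilon_{0i} \sim \mathcal{N}(0,2\sigma^2 I_k)$; by independence of $X_i$ and $Z_i$ the conditional density of $X_i$ is proportional to $\mathbbm{1}(x \in B_k)\exp(-\|x\|^2/(4\sigma^2))$, a truncated Gaussian on the ball under which $\|X_i\|=0$ occurs with probability zero, so $d^{\text{AI}}_k(\infty) > 0$. For the in-person regime, the uniform law on $B_k$ gives $P(\|X_1\| \le r) = r^k$, whence $d^{\text{IP}}_k(m) = \int_0^1 (1 - r^k)^m\, dr \to 0$ as $m \to \infty$ by dominated convergence. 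Combining the three facts produces a finite $m$ with $d^{\text{IP}}_k(m) < d^{\text{AI}}_k(\infty) \le d^{\text{AI}}_k(n)$ for all $n$, so $m^*_k < \infty$.

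The main obstacle is the monotonicity underlying $g(0) = \min_t g(t)$. In one dimension it reduces to the elementary computation in Section \ref{sec:Dim1}, but in $k$ dimensions the clone distance is the norm $\|X_i + Z_i\|$ of a sum of an independent uniform and a Gaussian, and establishing that smaller clone distances stochastically favor smaller true distances requires controlling the joint density of $(\|X_i\|, \|X_i + Z_i\|)$ obtained after integrating out the angular components of the Gaussian convolution. Everything else---the i.i.d.\ selection identity, positivity of the truncated-Gaussian mean, and the limit $d^{\text{IP}}_k(m) \to 0$---is routine once this monotonicity is in hand.
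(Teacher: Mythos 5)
Your proposal is correct and follows essentially the same route as the paper's proof: it drives $d^{\text{IP}}_k(m)\to 0$ via the uniform-ball tail computation, lower-bounds $d^{\text{AI}}_k(n)$ uniformly in $n$ by $d^{\text{AI}}_k(\infty)$ using the monotonicity of the conditional expectation from Lemma \ref{lemm:CondExp} (the paper packages this as Corollary \ref{corr:Infinite}), and establishes strict positivity of $d^{\text{AI}}_k(\infty)$ by identifying the conditional law of $X_i$ given a perfect clone match as a Gaussian truncated to the unit ball (the paper's Lemma \ref{lemm:Gaussian}). If anything, your treatment of the selection step---spelling out why conditioning on the realized minimum $T_{i^*_n}=t$ leaves $R_{i^*_n}$ distributed as $R\mid T=t$ by conditional independence across i.i.d.\ pairs---is slightly more explicit than the paper's.
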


Intuitively, imperfect representations place a fundamental limit on the quality of search: even with an arbitrarily large platform, the expected true distance between the Subject’s target and the platform’s best-ranked match remains bounded away from zero. In contrast, under in-person search, Subject's expected distance to the best among  $m$ individuals vanishes as $m$ grows large. Thus for sufficiently large $m$, Subject's expected match must be better in the in-person regime. 

Proposition \ref{prop:Finite} holds for any number of attributes $k$. The next result says that when $k$ is sufficiently large---so that many attributes are relevant to match quality---the value of AI-mediated search drops sharply.

\begin{theorem} \label{thm:2} For all $k$ sufficiently large, the AI-equivalent sample size is $m^*_k = 2$.
\end{theorem}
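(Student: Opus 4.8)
The plan is to reduce the theorem to a single inequality and then exploit the fact that the AI platform's benefit vanishes an order of magnitude faster than the benefit of a second in-person draw. First observe that $m^*_k\ge 2$ holds for \emph{every} $k$: with $n=1$ the platform returns the unique candidate $X_1$, so $d^{\text{AI}}_k(1)=\mathbb{E}\|X_1\|=d^{\text{IP}}_k(1)$, and the strict inequality in the definition of $m^*_k$ already fails at $n=1$ for $m=1$. Next, I would invoke the lower bound established in the proof of Proposition~\ref{prop:Finite}, namely $d^{\text{AI}}_k(n)\ge d^{\text{AI}}_k(\infty)$ for all $n$. Consequently, to prove $m^*_k=2$ it suffices to show, for all large $k$, the single inequality $d^{\text{IP}}_k(2)<d^{\text{AI}}_k(\infty)$: this gives $d^{\text{IP}}_k(2)<d^{\text{AI}}_k(\infty)\le d^{\text{AI}}_k(n)$ for every $n$, hence $m^*_k\le 2$, which together with $m^*_k\ge 2$ yields the claim.

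The in-person side is explicit. Since $\mathbb{P}(\|X_i\|>r)=1-r^k$, I would compute $d^{\text{IP}}_k(1)=\tfrac{k}{k+1}$ and $d^{\text{IP}}_k(2)=\int_0^1(1-r^k)^2\,dr=1-\tfrac{2}{k+1}+\tfrac{1}{2k+1}$, so that the gain from a second draw is
\[
d^{\text{IP}}_k(1)-d^{\text{IP}}_k(2)=\frac{1}{k+1}-\frac{1}{2k+1}=\frac{k}{(k+1)(2k+1)}=\Theta\!\left(\tfrac1k\right).
\]
For the AI side, I would first record the $k$-dimensional analogue of the warm-up computation: conditioning on a perfect clone match $X+Z=0$ (with $Z\sim\mathcal N(0,2\sigma^2 I_k)$ independent of $X$) tilts the law of $X$ to density $\propto \mathbbm{1}(\|x\|\le 1)\,e^{-\|x\|^2/(4\sigma^2)}$, so the radius $R=\|X\|$ acquires density $\propto r^{k-1}e^{-r^2/(4\sigma^2)}$ on $[0,1]$. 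Writing $\mu$ for the unconditional law of $\|X\|$ (density $k r^{k-1}$) and $g(r)=e^{-r^2/(4\sigma^2)}$, this means $d^{\text{AI}}_k(\infty)=\mathbb{E}_\mu[R\,g(R)]/\mathbb{E}_\mu[g(R)]$, a $g$-reweighting of $\mu$.

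The crux is to show the AI gain over a single draw is $o(1/k)$, so that it is dominated by the $\Theta(1/k)$ second-draw gain. The clean way to do this is via a covariance identity rather than a delicate endpoint Laplace expansion: since $d^{\text{IP}}_k(1)=\mathbb{E}_\mu[R]$,
\[
d^{\text{IP}}_k(1)-d^{\text{AI}}_k(\infty)=\mathbb{E}_\mu[R]-\frac{\mathbb{E}_\mu[R\,g(R)]}{\mathbb{E}_\mu[g(R)]}=\frac{-\,\mathrm{Cov}_\mu\!\big(R,g(R)\big)}{\mathbb{E}_\mu[g(R)]}\;\ge 0,
\]
the sign reflecting that $g$ is decreasing. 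I would then bound the numerator by the Lipschitz/Cauchy--Schwarz estimate $|\mathrm{Cov}_\mu(R,g(R))|\le L\,\mathrm{Var}_\mu(R)$, where $L=\sup_{[0,1]}|g'|\le \tfrac{1}{2\sigma^2}$ and $\mathrm{Var}_\mu(R)=\tfrac{k}{(k+1)^2(k+2)}=\Theta(1/k^2)$, while the denominator is bounded below by $e^{-1/(4\sigma^2)}$ because $R\le 1$. This yields $d^{\text{IP}}_k(1)-d^{\text{AI}}_k(\infty)=O(1/k^2)$. Comparing the two gains,
\[
d^{\text{AI}}_k(\infty)-d^{\text{IP}}_k(2)=\big[d^{\text{IP}}_k(1)-d^{\text{IP}}_k(2)\big]-\big[d^{\text{IP}}_k(1)-d^{\text{AI}}_k(\infty)\big]=\Theta\!\left(\tfrac1k\right)-O\!\left(\tfrac{1}{k^2}\right)>0
\]
for all large $k$, completing the argument. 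The main obstacle is precisely this last quantitative step—establishing that the AI gain is genuinely of smaller order than $1/k$; the covariance-plus-Lipschitz bound is what makes it rigorous and avoids grinding through the second-order asymptotics of the ratio $\int_0^1 r^k e^{-r^2/(4\sigma^2)}\,dr\big/\int_0^1 r^{k-1}e^{-r^2/(4\sigma^2)}\,dr$. The conceptual content is that in high dimensions both regimes push $R$ toward the boundary $r=1$ with window width $\Theta(1/k)$, over which the Gaussian tilt $g$ is nearly constant, so the platform can shave off only $\Theta(1/k^2)$, whereas an independent second draw reliably improves the minimum by $\Theta(1/k)$.
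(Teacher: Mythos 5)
Your proposal is correct, and at the level of architecture it follows the same skeleton as the paper: both establish $m^*_k\ge 2$, reduce $m^*_k\le 2$ to the single inequality $d^{\text{IP}}_k(2)<d^{\text{AI}}_k(\infty)$ via the MLRP-based uniform lower bound $d^{\text{AI}}_k(n)\ge d^{\text{AI}}_k(\infty)$ (Corollary \ref{corr:Infinite}, which you legitimately import from the proof of Proposition \ref{prop:Finite}), use the exact second-draw gain $d^{\text{IP}}_k(1)-d^{\text{IP}}_k(2)=\frac{k}{(k+1)(2k+1)}=\Theta(1/k)$, and win by showing the platform's gain over a single random draw is of smaller order.

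The genuine difference is how that crucial last estimate is proved, and your route is a clean alternative. The paper (Lemma \ref{lemm:AI}) runs a Laplace-type endpoint expansion of $\int_0^1 r^{k}e^{-r^2/(4\sigma^2)}\,dr$ and $\int_0^1 r^{k-1}e^{-r^2/(4\sigma^2)}\,dr$, splitting at $1-\delta_k$ with $\delta_k=\ln k/k$, to conclude $d^{\text{AI}}_k(\infty)=\frac{k}{k+1}+o(1/k)$, and needs a separate likelihood-ratio argument (Lemma \ref{lemm:1}) to pin down the sign $d^{\text{AI}}_k(\infty)<\frac{k}{k+1}$. You instead view $d^{\text{AI}}_k(\infty)$ as a $g$-reweighting of the unconditional radius law $\mu$ (density $kr^{k-1}$) with $g(r)=e^{-r^2/(4\sigma^2)}$, and obtain sign and rate in one stroke from the identity $d^{\text{IP}}_k(1)-d^{\text{AI}}_k(\infty)=-\mathrm{Cov}_\mu\bigl(R,g(R)\bigr)/\mathbb{E}_\mu[g(R)]$: Chebyshev's association inequality gives nonnegativity since $g$ is decreasing, while Cauchy--Schwarz with $\mathrm{Var}_\mu\bigl(g(R)\bigr)\le L^2\,\mathrm{Var}_\mu(R)$, $L\le\frac{1}{2\sigma^2}$, $\mathrm{Var}_\mu(R)=\frac{k}{(k+1)^2(k+2)}$, and $\mathbb{E}_\mu[g(R)]\ge e^{-1/(4\sigma^2)}$ yields $d^{\text{IP}}_k(1)-d^{\text{AI}}_k(\infty)=O(1/k^2)$. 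I verified these computations; they are sound, and your handling of the $m=1$ case---noting $d^{\text{AI}}_k(1)=d^{\text{IP}}_k(1)$ so the strict inequality already fails at $n=1$---is if anything tighter than the paper's appeal to Lemma \ref{lemm:1} at that step. What your approach buys: a stronger and fully explicit bound ($O(1/k^2)$ with computable constants rather than $o(1/k)$), and as a byproduct an explicit threshold for ``$k$ sufficiently large,'' roughly $k\gtrsim e^{1/(4\sigma^2)}/\sigma^2$, which the paper explicitly leaves beyond its scope. What the paper's approach buys: a two-sided asymptotic expansion of $d^{\text{AI}}_k(\infty)$ around $\frac{k}{k+1}$, which is of independent descriptive interest (it underlies the interpretation that the AI best match resembles a uniform draw in high dimensions) but is more than the theorem strictly needs.
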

\noindent Thus, in-person search yields a better expected outcome so long as Subject searches over at least two people. The threshold dimensionality 
$k$ at which this reversal occurs depends on the magnitude of AI measurement error $\sigma^2$, but it remains finite whenever $\sigma^2>0$.

The implication is not that AI-mediated search is inherently ineffective, but rather that its usefulness depends critically on the effective dimensionality of match quality. When compatibility depends on a small number of characteristics, large-scale automated search can perform well. By contrast, when many distinct attributes matter, Theorem~\ref{thm:2} implies a fundamental limitation to AI-mediated search: Even a very small amount of direct, in-person evaluation yields a better expected outcome than participation on an arbitrarily large AI platform. (This conclusion relies on the relevant attributes not being well approximated by a lower-dimensional set; otherwise, despite many nominal attributes, the effective dimensionality of evaluation would be small.)

The number of relevant attributes $k$ likely varies across applications. In routine jobs, for instance, match quality may be predictable from a relatively small set of attributes. Other roles, such as selecting a CEO, plausibly involve a substantially larger set of traits. Long-term partner choice may involve an even richer set of relevant attributes \citep{Joel2017,Joel2020}---spanning values, personality, emotional compatibility, habits, and life goals---though the effective dimensionality may also vary across individuals. Theorem \ref{thm:2} thus implies that AI-mediated search can be valuable when alignment on a small number of attributes suffices, but that human judgment is indispensable---and ultimately the limiting constraint---when compatibility depends on a diverse, multidimensional set of attributes.

\subsection{Proof Sketch} \label{sec:ProofSketch}

The proof of Theorem \ref{thm:2} proceeds in two main steps.

\subsubsection{A uniform lower bound for $d^{\text{AI}}_k(n)$.} \label{sec:MLRP} Consider an idealized AI platform that continuously samples candidates until encountering one whose representation exactly matches the target representation, i.e., an individual $i$ for whom $\|Y_{i0}-Y_{0i}\|=0$. Let
\[
d^{\mathrm{AI}}_k(\infty) \equiv 
\mathbb{E} \left(\|X_i\| : \|Y_{i0}-Y_{0i}\|=0 \right)
\]
be the expected true distance to such a perfectly matched individual. The following lemma shows that this quantity is a uniform lower bound on $d_k^{\text{AI}}(n)$ across all $n$.

\begin{lemma} \label{lemm:LowerBound}
For all $n \in \mathbb{Z}_+$,
\[
d^{\mathrm{AI}}_k(\infty) \le d^{\mathrm{AI}}_k(n).
\]
\end{lemma}

To show this, I demonstrate that the pair of random variables
\[
\bigl(\|X_i\|,\ \|Y_{i0} - Y_{0i}\|\bigr)
\]
satisfies the monotone likelihood ratio property. That is, conditioning on a smaller representation distance $\|Y_{i0} - Y_{0i}\|$ shifts the conditional distribution of $\|X_i\|$ toward smaller values.\footnote{Given the parametric assumptions of Section~\ref{sec:Model}, it is well known that the random vector $(X_i, Y_{i0} - Y_{0i})$ satisfies the monotone likelihood ratio 
property (MLRP), and that $(h(X_i), g(Y_{i0} - Y_{0i}))$ also satisfies MLRP for any 
increasing functions $h$ and $g$. However, proving the specific MLRP relation for 
$\left(\|X_i\|, \|Y_{i0} - Y_{0i}\|\right)$ requires a novel argument because norms in 
$\mathbb{R}^k$ are not monotone in each coordinate, and thus standard 
coordinate-wise MLRP arguments do not apply.}  Consequently, the function
\[
s \mapsto \mathbb{E}\!\left[\,\|X_i\| \,\middle|\, \|Y_{i0} - Y_{0i}\| = s \right]
\]
is increasing in $s$. Since the selected individual must satisfy $\|Y_{i0} - Y_{0i}\| \ge 0$, the claim follows.

\subsubsection{Benchmark comparison}

Let $d_k^{\mathrm{IP}}(1)$ denote the expected distance to a single uniformly drawn individual from the unit ball. It is well known that
\[
d_k^{\mathrm{IP}}(1) = \frac{k}{k+1}.
\]
I show that both $d_k^{\mathrm{IP}}(2)$ and $d^{\mathrm{AI}}_k(\infty)$ converge to $d_k^{\mathrm{IP}}(1)$ as $k$ grows large, but crucially they do so at different rates. Specifically, the expected reduction in distance gained by picking the smaller of \emph{two} draws rather than one is vanishing on the order of $1/k$, 
\[
 d_k^{\mathrm{IP}}(1) - d_k^{\mathrm{IP}}(2)
  = \Theta\!\left(\frac{1}{k}\right) \]
while the improvement from participating on a completely saturated AI platform vanishes \emph{strictly faster} than $1/k$,
\[ d_k^{\mathrm{IP}}(1) - d_k^{\mathrm{AI}}(\infty) = o\!\left(\frac{1}{k}\right).\footnote{More precisely, the ratio $\frac{ d_k^{\text{IP}}(1) - d_k^{\text{AI}}(\infty)}{1/k}$ vanishes to zero, while the ratio $\frac{d_k^{\text{IP}}(1) - d_k^{\text{IP}}(2)}{1/k}$ is asymptotically a positive constant.}\]
Thus $d^{\text{AI}}_k(\infty)$ exceeds $d^{\text{IP}}_k(2)$ for large $k$, meaning that in sufficiently high-dimensional personality spaces, Subject expects to be closer to the best of two draws in the in-person regime than to the best match among an infinite number of candidates on the AI platform. Together with Lemma \ref{lemm:LowerBound} and an argument that $m_k^*>1$, this yields the desired result. 

\subsubsection{Intuition}

It might seem surprising that $d^{\mathrm{AI}}_k(\infty)$ converges to $d^{\mathrm{IP}}_k(1)$ as $k$ grows large. The underlying force is a basic geometric feature of high-dimensional spaces. When points are drawn uniformly from the unit ball in $\mathbb{R}^k$, almost all probability mass concentrates in a thin shell near the boundary, and pairwise distances between points rapidly concentrate around their expectation. In the context of this model, this means that as $k$ grows large people are increasingly unique, and compatibility with the target is increasingly hard to find. This does not mean that close points fail to exist---on a fully saturated platform, such points appear with probability one. But the platform must identify these rare compatible candidates using noisy measurements. As a result, although $\|Y_{i0} - Y_{0i}\|$ remains informative about true compatibility in expectation (Section~\ref{sec:MLRP}), the identity of the candidate minimizing this quantity is increasingly determined by noise rather than by genuine differences in compatibility. The platform’s ranking mechanism deteriorates, and its performance eventually converges to that of a single random draw.

\subsubsection{Simulation}
Although a bound for how large $k$ must be for Theorem \ref{thm:2} to apply is beyond the scope of this paper, Figure \ref{fig:Simulate} reports estimates for $d^{\text{IP}}_k(2)$ and $d^{\text{AI}}_k(N)$ with $N=10,000$ (chosen to be some arbitrary large number) as we vary the number of dimensions $k$.\footnote{In more detail: in each of 1000 iterations, I draw $N=10,000$ realizations of $(X_i,Y_{i0},Y_{0i})$ in the AI representation regime and find the index $i$ where $\| Y_{i0}-Y_{0i} \| $ is minimized. I then average over the values $\|X_{i^*}\|$ to derive an estimate of $d^{\text{AI}}_k(N)$. Likewise in the in-person regime, I draw 2 realizations of $(X_1,X_2)$ and average over the smaller of the two norms over the 1000 trials.} The estimate of $d^{\text{AI}}_k(N)$ exceeds the estimate of $d^{\text{IP}}_k(2)$ for $k\geq 150$. Thus if match compatibility is based on 150 or more attributes, then Subject should prefer an in-person search over two individuals over participating on an AI platform with 10,000 individuals. (See Table \ref{tab:Data} in Appendix \ref{app:Table} for the exact values in Figure \ref{fig:Simulate}.) Note that the small differences between $d^{\text{IP}}_k(2)$ and $d^{\text{AI}}_k(N)$ for $k$ large do not mean that the differences between in-person search and AI-mediated search are small, since in practice we expect searchers to search over more than 2 people. (Moreover, the size of the search may be endogenously selected as a function of $k$.)

\begin{figure}
    \begin{center}
        \includegraphics[scale=0.25]{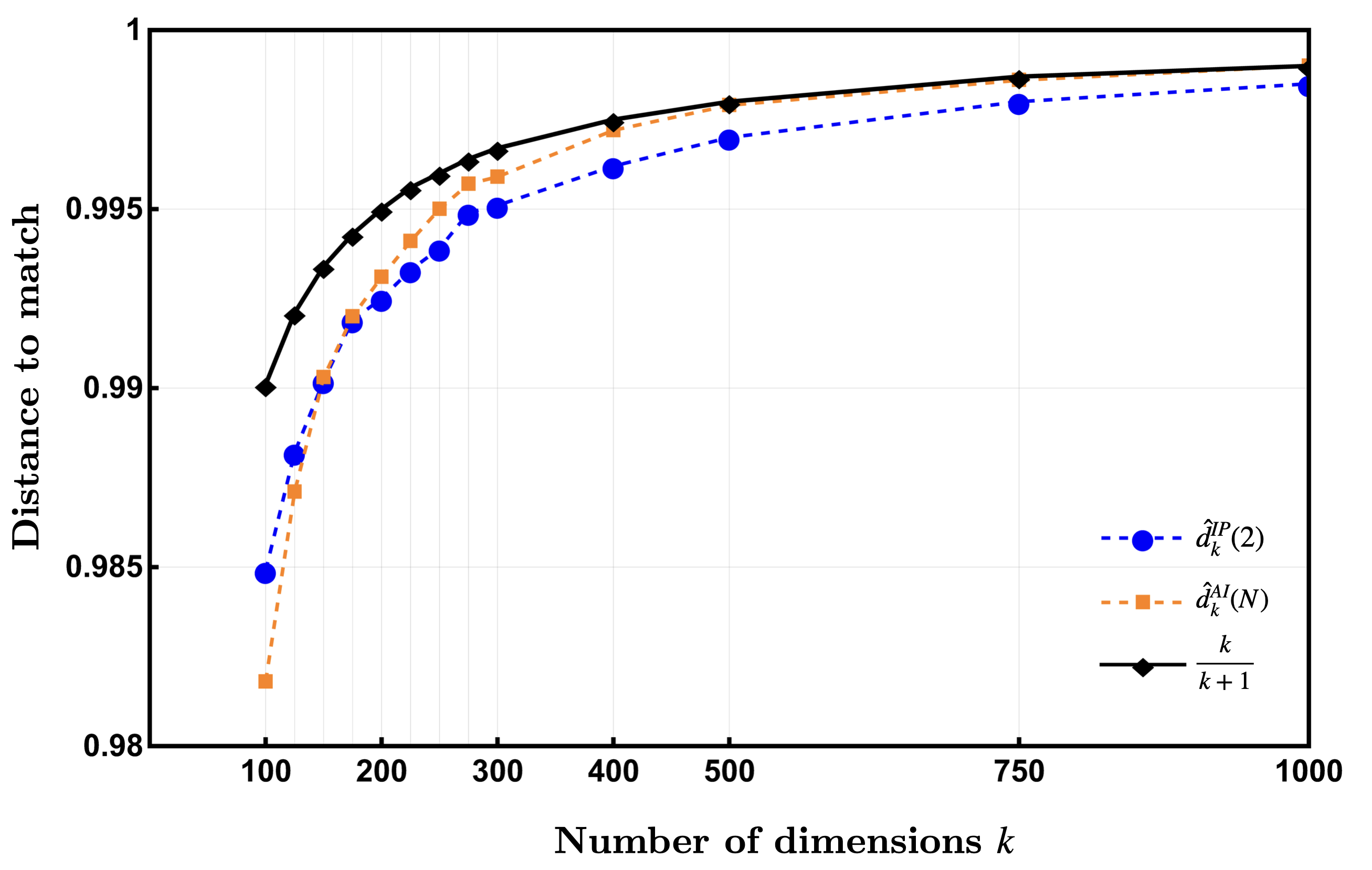}
        \caption{This figure reports estimates of $d^{\text{IP}}_k(2)$ and $d^{\text{AI}}_k(N)$ for $\sigma=0.05$ and $N=10,000$, using their average values over 1,000 draws. The quantity $d^{\text{AI}}_k(N)$ is initially below $d^{\text{IP}}_k(2)$ and eventually overtakes it. Both $d^{\text{IP}}_k(2)$ and $d^{\text{AI}}_k(N)$ converge to $k/(k+1)$ from below.} \label{fig:Simulate}
    \end{center}
\end{figure}

\subsection{Discussion} \label{sec:Discussion}

Several aspects of the model are discussed below.

\subsubsection{The role of the parametric assumptions.} 
In this model, personalities are uniformly distributed within a unit ball, and AI errors follow a multivariate Gaussian distribution. The proofs of the main results do depend on these parametric assumptions---both Proposition~\ref{prop:Finite} and Theorem \ref{thm:2}  rely on the  monotone-likelihood ratio property discussed in Section \ref{sec:MainResults} (Lemma \ref{lemm:CondExp}), and Theorem \ref{thm:2} further uses rate-of-convergence results that rely on exact expressions. Nevertheless, the underlying mechanisms are more general. The key insight behind Proposition~\ref{prop:Finite} is that the expected distance to the best match on the AI platform remains bounded away from zero (because of AI approximation errors). Meanwhile, Theorem~\ref{thm:2} is driven by the tendency of random points to isolate from one another in high dimensions, which causes their pairwise distances to concentrate around their expected values. This phenomenon persists under a wide range of distributional assumptions.

\subsubsection{Two-sided matching} 
This paper's model of search is one-sided, and thus abstracts from the question of whether Subject’s best match (once identified) would agree to this pairing. Although I do not pursue a complete strategic model in the present paper, in many reasonable formulations of such a model, Subject's expected payoff would decrease in the number of candidates considered by Subject's potential partners. This effect would further favor the in-person regime, where each individual searches over fewer candidates.

\subsubsection{Increasingly accurate AI}
Section \ref{sec:MainResults} reveals a limitation of the AI representation regime that persists regardless of the quality of AI approximation (i.e., the size of the noise parameter $\sigma$). Nevertheless, we might expect that as the AI representations become more accurate representations of their underlying users, the AI representation regime becomes more attractive relative to the in-person regime. In the appendix, I show that for every fixed dimension $k$, the quantity $d^{\text{AI}}_{k,\sigma}(\infty)$ (where the noise level $\sigma$ is now made explicit) converges to zero as $\sigma\rightarrow 0$. That is, Subject's expected distance to Subject's best match  converges to zero as the AI's approximation errors vanish. This means that while the value to increasing the size of the AI platform is capped, the value to increasing the accuracy of the AI representations is not.

\subsubsection{Sequential Search} Appendix \ref{app:Sequential} describes a sequential search model in which the sample sizes $m$ and $n$ are endogenously chosen by Subject. (It is possible to mirror the initial setting by supposing that the costs of sampling in the AI regime are lower, thus micro-founding the consideration of substantially larger $n$ on the AI platform.) I show that when there is a fixed cost to participating in the AI platform that is at least as large as the cost of searching over two people in person, then Subject prefers in-person search. 

\subsubsection{Human-AI Integrated Search}

One might consider using the AI platform to identify $m$ potential candidates, whom Subject then meets in person. This is unlikely to substantially improve Subject's payoffs beyond what Subject can achieve by searching over $m$ individuals. The logic is as follows: In the proof of Theorem \ref{thm:2}, I showed that (for large $k$) the best match on the AI platform is similar in distribution to a draw from a uniform distribution over the unit ball. By extension, I conjecture that (for large $k$) the joint distribution of the $m$ best matches on the AI platform converges to that of $m$ draws from a uniform distribution over the unit ball. This would mean that for large $k$, Subject prefers searching over $m+1$ people in person rather than selecting from $m$ people sourced from the AI platform.

\section{ Heterogeneous Data Quantities} \label{sec:Supplementary}

The remainder of the paper explores heterogeneity in the quality of the AI representations. In particular, since not all individuals have the same quantity of personal data to share (e.g., because of variation in social media usage) or the same willingness to share it (e.g., because of variation in privacy preferences), the ability of the AI platform to generate representations that accurately represent each person will differ. 

In this extension, individuals belong to either of two groups: a ``data-rich'' group for whom the approximation error is smaller, and a ``data-poor'' group for whom the approximation error is larger. Proposition \ref{prop:Noise} says that in this world we will see an inequality emerge between people who are better and worse understood by the AI. In particular, although the groups are otherwise identical in distribution---and thus, Subject's actual best match is equally likely to be from either group---the probability that Subject is matched to someone in the data-rich group strictly exceeds $1/2$. This probability further converges to 1 as either the number of dimensions grows large (Corollary \ref{prop:LimitNoiseLargek}) or the ratio of noise variances grows large (Corollary \ref{prop:LimitNoiseLargeVar}).

In more detail, the AI platform searches over $n$ individuals in each of two groups $g \in \{R,P\}$. Individual $i$ from group $g$ has true personality vector $X_i^g$, which is independently drawn from a uniform distribution on the unit ball. When the platform simulates an interaction between Subject and this candidate, the representation of individual $i$ is
\[Y^g_{i0}=X^g_i+\varepsilon_{i0}^g, \quad \quad \varepsilon_{i0}^g\sim \mathcal{N}(0,\sigma_g^2 I_k)\]
 and the representation of Subject's target is
\[Y^g_{0i} = x_0 + \varepsilon^g_{0i}, \quad  \quad \varepsilon^g_{0i} \sim \mathcal{N}(0,\sigma_0^2 I_k)\]
for some constant $\sigma_0^2>0$.\footnote{For interpretation, we might set $\sigma_0^2 \in\{\sigma_R^2,\sigma_P^2\}$, although this is not necessary for the result.} Crucially  $\sigma^2_R < \sigma^2_P$, so representations are more accurate for data-rich individuals. All noise terms are mutually independent.

As before, the AI platform identifies the individual whose AI representation is most compatible with Subject's AI representation. Subject is thus matched to individual $i^*_{n,k}$ from group $g^*_{n,k}$ where
\[(i^*_{n,k},g^*_{n,k}) = \argmin_{g\in \{R,P\}, 1 \leq i \leq  n}\| Y^g_{i0}-Y^g_{0i} \|.\]

The next result says that as the population grows large, the probability with which Subject is matched to someone from the data-rich group exceeds $1/2$.

\begin{proposition} \label{prop:Noise} In the limit as the population grows large, the probability that Subject's match is from the data-rich group converges to
\[\lim_{n\rightarrow \infty} P\left(g_{n,k}^* =R\right) > \frac12.\]
\end{proposition}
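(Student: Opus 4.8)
The plan is to reduce the matching event to a competition between the minima of two independent samples of clone distances, and then to determine the limiting winner by comparing the lower-tail behavior of each group's clone-distance distribution. Writing $D_i^g = Y_{i0}^g - Y_{0i}^g = X_i^g + Z_i^g$, where $Z_i^g = \varepsilon_{i0}^g - \varepsilon_{0i}^g \sim \mathcal{N}(0,\tau_g^2 I_k)$ with $\tau_R^2 = 2\sigma_R^2$ and $\tau_P^2 = \sigma_R^2 + \sigma_P^2$, the vectors $\{D_i^g\}$ are independent across $i$ and $g$. Since ties occur with probability zero, the event $\{g_{n,k}^* = R\}$ coincides (up to a null set) with $\{\min_{1\le i\le n}\|D_i^R\| < \min_{1\le i\le n}\|D_i^P\|\}$. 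The density of $D_i^g$ is the convolution $f_k^g(d) = \frac{1}{V_{1,k}}\int_{B_k} (2\pi\tau_g^2)^{-k/2}\exp(-\|d-x\|^2/(2\tau_g^2))\,dx$; evaluating at $d=0$ and substituting $\tau_R^2,\tau_P^2$ recovers the stated formulas for $f_k^R(0)$ and $f_k^P(0)$. Because $f_k^g$ is a Gaussian convolution of a bounded, compactly supported density, it is continuous and strictly positive at the origin.

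First I would pin down the lower-tail behavior of the scalar clone distance $S_i^g = \|D_i^g\|$. Writing its CDF as $F^g(s) = \int_{\|d\|\le s} f_k^g(d)\,dd$ and using continuity of $f_k^g$ at $0$, polar integration gives $F^g(s) = V_{1,k} f_k^g(0)\, s^k\,(1+o(1))$ and a density $\rho^g(s) = k\,V_{1,k} f_k^g(0)\, s^{k-1}(1+o(1))$ as $s\to 0$. Intuitively, the chance that a clone distance falls below $s$ is the volume $V_{1,k}s^k$ of a small ball times the density $f_k^g(0)$ at its center, so the constant $c_g \equiv V_{1,k} f_k^g(0)$ governs how fast group $g$'s minimum approaches zero.

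Next I would evaluate the competing-minima probability directly. Conditioning on the group-$R$ minimum and using independence across the two samples,
\[
P\!\left(\min_i\|D_i^R\| < \min_i\|D_i^P\|\right) = \int_0^\infty n\,\rho^R(s)\,\bigl(1-F^R(s)\bigr)^{n-1}\bigl(1-F^P(s)\bigr)^{n}\,ds.
\]
Rescaling $s = t\,n^{-1/k}$ turns the tail factors into $(1 - c_R t^k/n)^{n-1} \to e^{-c_R t^k}$ and $(1 - c_P t^k/n)^{n}\to e^{-c_P t^k}$, while the prefactor $n\,\rho^R(s)\,n^{-1/k}$ converges to $k\,c_R\,t^{k-1}$; crucially the powers of $n$ cancel exactly. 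Passing to the limit yields $\int_0^\infty k\,c_R\,t^{k-1} e^{-(c_R+c_P)t^k}\,dt = \frac{c_R}{c_R+c_P} = \frac{f_k^R(0)}{f_k^R(0)+f_k^P(0)}$, where the last equality uses that the common factor $V_{1,k}$ cancels. The main obstacle is justifying this exchange of limit and integral: I would supply a uniform (in $n$) integrable dominating function in the variable $t$, using a global lower bound $F^g(s)\ge \underline{c}\,s^k$ valid for small $s$ together with the exponential decay of $(1-F^P(s))^n$ to control the region where the small-$s$ approximation no longer holds.

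Finally I would establish the strict inequality $f_k^R(0)>f_k^P(0)$, which gives the bound $>\tfrac12$. Observing that $f_k^g(0) = \frac{1}{V_{1,k}}\,P\bigl(\|W_g\|\le 1\bigr)$ for $W_g\sim\mathcal{N}(0,\tau_g^2 I_k)$, and that $\|W_g\|$ has the law of $\tau_g\|W\|$ for a standard Gaussian $W$, we have $f_k^g(0) = \frac{1}{V_{1,k}} P(\|W\|\le 1/\tau_g)$, which is strictly decreasing in $\tau_g$. Since $\sigma_R^2<\sigma_P^2$ implies $\tau_R^2 = 2\sigma_R^2 < \sigma_R^2+\sigma_P^2 = \tau_P^2$, it follows that $f_k^R(0) > f_k^P(0)$ and hence the limiting probability strictly exceeds $1/2$.
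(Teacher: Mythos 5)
Your proposal is correct, but it takes a genuinely different route from the paper in its two key steps. For the limiting probability, the paper uses an extreme-value argument: it shows that the scaled minima $Y_n^g = n\,(M_n^g)^k$ converge in distribution to independent exponential random variables with rates $\lambda_g = V_{1,k} f_k^g(0)$, and then invokes the standard formula $P(Y^R < Y^P) = \lambda_R/(\lambda_R+\lambda_P)$ for an exponential race. You instead write the competing-minima probability as an explicit integral $\int_0^\infty n\,\rho^R(s)\,(1-F^R(s))^{n-1}(1-F^P(s))^{n}\,ds$, rescale by $s = t\,n^{-1/k}$, and pass to the limit; both arguments rest on the same local expansion $F^g(s) = V_{1,k} f_k^g(0)\, s^k\,(1+o(1))$, and each has one technical step requiring care --- the paper's is the (unstated) portmanteau argument that joint convergence plus an atomless limit gives $P(Y_n^R<Y_n^P)\to P(Y^R<Y^P)$, yours is the dominated-convergence justification, which you correctly flag and for which your two-region strategy does work: on $\{s \le s_0\}$ the bounds $\rho^R(s)\le C s^{k-1}$ and $F^R(s)\ge \underline{c}\,s^k$ give an integrable dominating function $C' t^{k-1} e^{-\underline{c}t^k/2}$, while the region $\{s>s_0\}$ contributes at most $(1-F^P(s_0))^n \to 0$. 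For the strict inequality $f_k^R(0)>f_k^P(0)$, your argument is cleaner than the paper's: you observe $f_k^g(0) = V_{1,k}^{-1} P(\|W\|\le 1/\tau_g)$ for a standard Gaussian $W$ and conclude by monotonicity in $\tau_g$, whereas the paper converts the integral to a lower incomplete gamma function $\frac{k}{2\pi^{k/2}}\int_0^{1/(2\nu_g)} t^{k/2-1}e^{-t}\,dt$ via spherical coordinates and a change of variables, then notes that this is decreasing in $\nu_g$. The paper's extreme-value formulation buys a reusable characterization (the exponential limit law is what its Corollaries 1 and 2 build on), while your direct computation is more self-contained and avoids appealing to the exponential-race fact.
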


 Thus, even though  individuals from the data-rich group and data-poor group are identical in distribution (both are drawn uniformly at random from the unit ball), a data-rich individual is more likely to be identified as Subject's best match.\footnote{
 This result is reminiscent of \citet{CornellWelch1996}'s finding that more accurate signals about one group lead an employer to hire from that group with probability converging to 1 as the population grows large. But because my analysis focuses on small bounded distances rather than large unbounded qualities, the limiting probabilities here are interior.}

For a brief intuition, observe that each individual's distance to Subject is given by $\| Y^g_{i0} - Y^g_{0i} \|$. In a large population, Subject's best match is an individual for whom this distance is very small, i.e., an extreme draw from the underlying distribution. We thus need to  analyze whether that extreme is more likely to come from the data-poor or data-rich group. I show that (a suitable transformation of) the smallest distance to any group-$g$ individual, $M_n^g = \min_{1\le i \le n} S_i^g$, is asymptotically distributed like an exponential random variable with rate proportional to the density of $ Y^g_{i0} - Y^g_{0i}$ at zero---call this $f_k^g(0)$. This means that group $g$’s probability of ``winning,'' i.e., having the smallest minimum distance to Subject, is asymptotically determined by the relative size of its exponential rate, and thus by the size of $f_k^g(0)$. Since lower noise variance leads to a more sharply peaked density at zero (i.e., $f_k^R(0)>f_k^P(0)$), the data-rich group is more likely to yield the closest representation.

The following corollaries show that this inequality is exaggerated in either of two limits:  as the number of dimensions $k$ grows large, or the disparity between noise variances $\sigma_P^2/\sigma_R^2$ grows large. In both cases, the probability that a data-rich individual is selected converges to 1. 

\begin{corollary}[Many Dimensions] \label{prop:LimitNoiseLargek} In the limit as the number of dimensions and population size both grow large, the probability that Subject's match is from the data-rich group converges to 1, i.e.,
\[\lim_{k \rightarrow \infty} \left(\lim_{n\rightarrow \infty} P\left(g_{n,k}^* = R \right) \right)=1.\]
\end{corollary}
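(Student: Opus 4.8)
The plan is to combine the exact limiting formula of Proposition~\ref{prop:Noise} with the explicit expressions for $f_k^R(0)$ and $f_k^P(0)$. Since
\[
\lim_{n\to\infty} P(g_{n,k}^*=R) = \frac{f_k^R(0)}{f_k^R(0)+f_k^P(0)} = \frac{1}{1 + f_k^P(0)/f_k^R(0)},
\]
it suffices to show that the ratio $f_k^P(0)/f_k^R(0)$ tends to zero as $k\to\infty$; the claimed limit of $1$ then follows immediately.

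Substituting the two densities and cancelling the common factor $1/V_{1,k}$, the ratio factors as
\[
\frac{f_k^P(0)}{f_k^R(0)} = \left(\frac{2\sigma_R^2}{\sigma_R^2+\sigma_P^2}\right)^{k/2} \cdot \frac{\int_{B_k}\exp\!\left(-\frac{\|x\|^2}{2(\sigma_R^2+\sigma_P^2)}\right)dx}{\int_{B_k}\exp\!\left(-\frac{\|x\|^2}{4\sigma_R^2}\right)dx}.
\]
The first factor is geometric with base $2\sigma_R^2/(\sigma_R^2+\sigma_P^2)$, which is strictly less than $1$ precisely because $\sigma_R^2<\sigma_P^2$; hence it decays exponentially in $k$. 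The whole argument therefore reduces to showing that the second factor---the ratio of the two truncated-Gaussian integrals over the unit ball---does not grow fast enough to offset this exponential decay.

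To control that ratio I would bound numerator and denominator separately by $k$-independent constants, up to the common volume $V_{1,k}$. On $B_k$ the numerator integrand is at most $1$, so the numerator is at most $V_{1,k}$; and since $\|x\|^2\le 1$ on $B_k$, the denominator integrand is at least $\exp(-1/(4\sigma_R^2))$, so the denominator is at least $V_{1,k}\exp(-1/(4\sigma_R^2))$. Dividing, the ratio of integrals is bounded above by the constant $\exp(1/(4\sigma_R^2))$, independent of $k$. Combining with the previous display,
\[
\frac{f_k^P(0)}{f_k^R(0)} \le \exp\!\left(\frac{1}{4\sigma_R^2}\right)\left(\frac{2\sigma_R^2}{\sigma_R^2+\sigma_P^2}\right)^{k/2} \longrightarrow 0 \quad \text{as } k\to\infty,
\]
which yields the corollary.

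There is no real obstacle here: the only point to verify is that the ratio of the two Gaussian integrals cannot grow exponentially in $k$, and the crude pointwise bounds above already trap it between two $k$-independent constants, since each integral is squeezed between a fixed constant multiple of $V_{1,k}$ and $V_{1,k}$ itself. If one wanted sharp asymptotics rather than a bound, one would observe that the weight $r^{k-1}$ arising in polar coordinates concentrates each integral near the boundary sphere $\|x\|=1$, so that each integral is asymptotic to a constant multiple of $V_{1,k}$ and their ratio converges to $\exp\!\left(\frac{1}{4\sigma_R^2}-\frac{1}{2(\sigma_R^2+\sigma_P^2)}\right)$; but this refinement is unnecessary, because the exponential decay of the geometric prefactor dominates any bounded fluctuation in the integral ratio.
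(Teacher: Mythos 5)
Your proof is correct and takes essentially the same route as the paper: both reduce the claim via Proposition \ref{prop:Noise} to showing that the ratio $f_k^P(0)/f_k^R(0)$ vanishes (equivalently, $f_k^R(0)/f_k^P(0)\to\infty$), and both drive this with the same exponential-in-$k$ factor coming from the mismatched Gaussian normalizations, tamed by the same crude pointwise bounds on the kernel over the unit ball. The only cosmetic difference is that the paper first passes to the radial (incomplete-gamma) representation of $f_k^g(0)$ and bounds $e^{-t}$ between $e^{-1/(2\nu_g)}$ and $1$ on $[0,1/(2\nu_g)]$, which under the change of variables $t=\|x\|^2/(2\nu_g)$ is exactly your bound $e^{-1/(4\sigma_R^2)}\le \exp\left(-\|x\|^2/(4\sigma_R^2)\right)\le 1$ on $B_k$.
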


This result first takes the platform's population size $n$ to grow large for a fixed number of dimensions $k$, and then allows $k$ to increase. This order of limits is technically convenient, since it allows us to invoke the limiting characterization in Proposition \ref{prop:Noise}.\footnote{If we were to reverse these limits---taking $k$ to infinity first---there would be no single, fixed distribution from which to draw $n$ samples, and thus the extreme-value approach used to show Proposition \ref{prop:Noise} would not directly apply.} It also has a straightforward conceptual interpretation, implying that high-dimensional geometry amplifies the large-population bias towards data-rich individuals.

The next corollary considers any sequence of noise variances $(\sigma^2_{P,m})_{m=1}^\infty$ and $(\sigma^2_{R,m})_{m=1}^\infty$ satisfying (1) $\lim_{m \rightarrow \infty} \sigma_{R,m}^2 = c > 0$, i.e., the smaller noise variance $\sigma_{R,m}^2$ limits to a strictly positive constant; and (2) the ratio \[\frac{\sigma^2_{P,m}}{\sigma^2_{R,m}} \rightarrow \infty\] as $m\rightarrow \infty$, i.e., the noise variance $\sigma_{P,m}^2$ grows large relative to $\sigma_{R,m}^2$. Let $(i^*_{n,m},g^*_{n,m})$ denote the (random) best-match on an AI platform with $n$ data-rich and $n$ data-poor AI representations, whose noise variances are respectively given by $\sigma^2_{R,m}$ and $\sigma^2_{P,m}$. Corollary \ref{prop:LimitNoiseLargeVar} says that as this noise ratio grows large, the large-population probability of sampling from the data-rich group converges to one.

\begin{corollary}[Large Noise Disparity] \label{prop:LimitNoiseLargeVar} In the limit as the noise ratio and population size both grow large, the probability that Subject's match is from the data-rich group converges to 1, i.e.,
\[\lim_{m \rightarrow \infty} \left(\lim_{n\rightarrow \infty} P\left(g_{n,m}^* = R \right) \right)=1.\]
\end{corollary}

In each of these results, the systematic selection of data-rich individuals benefits the Subject: the platform selects matches whose AI representations are compatible with the Subject’s target, and data-rich individuals have more accurate representations.

But this selection process has  concerning distributional implications. Since the results are not specific to Subject's characteristics, they suggest that individuals who are better understood by AI systems will be systematically more in demand. This advantage could exacerbate existing inequalities if the property of being data-rich correlates with existing group identities. They moreover point to an emerging form of social stratification: in a world of AI representations, an individual's opportunities may depend not only on their inherent qualities but also on how effectively AI systems can capture and convey those qualities.

\section{Conclusion} \label{sec:Conclusion}

Many papers have compared human and AI evaluation on problems that are hard for both humans and machines (e.g., medical diagnosis). But there are important evaluations that are intrinsically easier for people---for example, no machine or expert knows better than ourselves whose company we enjoy. In such settings, human evaluation is  more accurate than AI evaluation, but also more costly. Can automated search over a sufficiently large number of candidates compensate for the AI's errors during evaluation? This paper suggests that  scale is valuable primarily when the evaluation problem is effectively low-dimensional. If match quality can be summarized by a small number of attributes, then large-scale AI-mediated search can indeed substantially improve outcomes. In contrast, when many hard-to-measure attributes matter, expanding algorithmic scale alone cannot substitute for direct human judgment.

Several important features of AI-mediated search are abstracted away from in the present model. First, Subject’s objective may be richer than Euclidean distance from a target outcome, and the platform may additionally face the problem of learning or inferring this objective from limited data. Second, both searchers and candidates may strategically manipulate or curate their AI representations. Incorporating these considerations would plausibly further limit the effectiveness of AI-mediated search, while also introducing new strategic and informational forces. They are left as interesting directions for future work.

\clearpage
\appendix

\section{Details on Figure \ref{fig:Simulate}} \label{app:Table}

Table \ref{tab:Data} reports the estimated values $\hat{d}^{\text{IP}}_k(2)$, $\hat{d}^{\text{AI}}_k(N)$ (for $N=10,000$), and $\frac{k}{k+1}$ depicted in Figure \ref{fig:Simulate}, as well as estimates for additional values of $k$. For small values of $k$,  $\hat{d}_k^{\text{AI}}(N)<\hat{d}^{\text{IP}}_k(2)$; indeed, for each of $k=1,5,10$, the expected distance to the AI best match is substantially lower than the expected distance to the in-person best match. This order reverses when the number of dimensions is large (in this case, 150 or larger). Both quantities approach $k/(k+1)$ from below as $k$ grows large.

\begin{table}[H]
\begin{center}
\begin{tabular}{cccc}
\toprule
$k$ & $\hat{d}^{\text{IP}}_k(2)$ & $\hat{d}^{\text{AI}}_k(N)$ & $\frac{k}{k+1}$ \\
\midrule
1   & 0.3346 & \cellcolor{gray!30}{0.0551} & 0.5 \\
5   & 0.7554 & \cellcolor{gray!30}{0.1743} & 0.8333 \\
10  & 0.8675 & \cellcolor{gray!30}{0.3664} & 0.9091 \\
50  & 0.9709 & \cellcolor{gray!30}{0.9126} & 0.9804 \\
100 & 0.9849 & \cellcolor{gray!30}{0.9818} & 0.9901 \\
125 & 0.9882 & \cellcolor{gray!30}{0.9871} & 0.9921 \\
150 & \cellcolor{gray!30}{0.9902} & 0.9903 & 0.9934 \\
175 & \cellcolor{gray!30}{0.9919} & 0.9920 & 0.9943 \\
200 & \cellcolor{gray!30}{0.9925} & 0.9931 & 0.9950 \\
225 & \cellcolor{gray!30}{0.9933} & 0.9941 & 0.9956 \\
250 & \cellcolor{gray!30}{0.9939} & 0.9950 & 0.9960 \\
275 & \cellcolor{gray!30}{0.9949} & 0.9957 & 0.9964 \\
300 & \cellcolor{gray!30}{0.9951} & 0.9959 & 0.9967 \\
400 & \cellcolor{gray!30}{0.9962} & 0.9972 & 0.9975 \\
500 & \cellcolor{gray!30}{0.9970} & 0.9979 & 0.9980 \\
750 & \cellcolor{gray!30}{0.9980} & 0.9986 & 0.9987 \\
1000 & \cellcolor{gray!30}{0.9985} & 0.9990 & 0.9990 \\
\bottomrule
\end{tabular}
\end{center}
\caption{This table reports estimates of $d^{\text{IP}}_k(2)$ and $d^{\text{AI}}_k(N)$  for $\sigma=0.05$ and $N=10,000$. The cell with the smallest value is highlighted in gray.} \label{tab:Data}
\end{table}

\section{Proofs of the Main Results}

Appendix \ref{supp:RealLife} and Appendix \ref{supp:AI}  present key supporting results about search in the in-person and AI representation regimes. These results are used in Appendix \ref{proof:Finite} to prove Proposition \ref{prop:Finite}, and in Appendix \ref{proof:2} to prove Theorem \ref{thm:2}. 

\subsection{Supporting Results about In-Person Search} \label{supp:RealLife}

Lemma \ref{lemm:RealLife} provides a closed-form expression for $d^{\text{IP}}_k(m)$, the expected distance between Subject's target and Subject's match under in-person search with $m$ samples. Lemma \ref{lemm:BoundRL} uses this lemma to bound the distance between $d^{\text{IP}}_k(2)$ and the benchmark $d^{\text{IP}}_k(1)=k/(k+1)$. 

\begin{lemma} \label{lemm:RealLife} For every number of dimensions $k$ and sample size $m$, the expected distance between Subject and Subject's match in the in-person regime is
\[
d^{\text{IP}}_k(m) = E\left[ \min_{1 \leq i \leq m} \| X_{i} \| \right]  = \frac{1}{k} B\left(\frac1k, m+1\right)
\]
where $B$ denotes the Beta function.
\end{lemma}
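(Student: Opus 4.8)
The plan is to reduce the problem to a one-dimensional integral over the common distribution of the norms $\|X_i\|$, and then to recognize that integral as a Beta integral via a power substitution.

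First I would determine the distribution of $R_i \equiv \|X_i\|$ for a single uniform draw $X_i$ from $B_k$. Since the uniform density on $B_k$ is constant and the ball of radius $r$ has volume proportional to $r^k$, the fraction of $B_k$ lying within radius $r$ is exactly $r^k$. Hence $P(R_i \le r) = r^k$ and $P(R_i > r) = 1 - r^k$ for $r \in [0,1]$. This is the only place where the geometry of the ball enters.

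Second, using that the $X_i$ (and hence the $R_i$) are i.i.d., the survival function of the minimum factorizes: for $r \in [0,1]$, $P(\min_{1\le i \le m} R_i > r) = (1 - r^k)^m$. Because $\min_i R_i$ is nonnegative and bounded above by $1$, the layer-cake (tail-integral) formula gives $d^{\text{IP}}_k(m) = \int_0^1 P(\min_i R_i > r)\, dr = \int_0^1 (1 - r^k)^m\, dr$. I would then evaluate this integral by the substitution $u = r^k$, so that $r = u^{1/k}$ and $dr = \tfrac1k u^{1/k - 1}\, du$, with the limits $r = 0, 1$ carried to $u = 0, 1$. This turns the integral into $\tfrac1k \int_0^1 u^{1/k - 1}(1-u)^m\, du$, which is precisely $\tfrac1k B\!\left(\tfrac1k, m+1\right)$ by the definition $B(a,b) = \int_0^1 u^{a-1}(1-u)^{b-1}\,du$ with $a = 1/k$ and $b = m+1$. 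This yields the claimed closed form.

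No step presents a genuine obstacle: the only point requiring care is the first step, namely justifying $P(R_i \le r) = r^k$ from the volume scaling of the ball, after which everything is an elementary computation. The substantive move is recognizing the resulting integral as a Beta integral through the substitution $u = r^k$.
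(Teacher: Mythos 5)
Your proposal is correct and follows essentially the same route as the paper's proof: volume scaling gives $P(\|X_i\| \le r) = r^k$, independence factorizes the survival function of the minimum, and the tail-integral formula yields $\int_0^1 (1-r^k)^m\,dr$, which is the Beta integral. The only difference is that you spell out the substitution $u = r^k$ explicitly, whereas the paper states the final equality without showing this (routine) step.
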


\begin{proof} First observe that
\begin{align*}
    P\left(\min_{1 \leq i \leq m} \| X_i \| > r\right) & =  P\left(\| X_1 \| > r\right)^m && \mbox{ since $X_i$ are i.i.d.} \\
    & = \left(1-\frac{V_{r,k}}{V_{1,k}}\right)^m = (1-r^k)^m
\end{align*}
 where $V_{r,k} = \frac{\pi^{k/2} r^k}{\Gamma\left(\frac{k}{2}+1\right)}$ is the volume of a ball with radius $r$. Thus
 \begin{align*}
    E\left[ \min_{1 \leq i \leq m} \| X_{i} \| \right] & = \int_0^1  P\left(\min_{1 \leq i \leq m} \| X_i \| > r\right) dr \\
    & = \int_0^1 (1-r^k)^m dr \\
    &= \frac{1}{k} B\left(\frac{1}{k},m+1\right)
 \end{align*}
 as desired.
 \end{proof}

\begin{lemma} \label{lemm:BoundRL} For every number of dimensions $k$,
\begin{equation} \label{identity:2}
d^{\text{IP}}_k(2) =  d^{\text{IP}}_k(1) - \frac{k}{(2k+1)(k+1)} = \frac{k}{k+1} - \frac{k}{(2k+1)(k+1)}.
\end{equation}
\end{lemma}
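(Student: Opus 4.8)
The plan is to prove this purely by computation, evaluating the closed-form expression from Lemma~\ref{lemm:RealLife} at $m=1$ and $m=2$ and then checking the stated algebraic identity. Recall that Lemma~\ref{lemm:RealLife} gives $d^{\text{IP}}_k(m) = \frac{1}{k} B\!\left(\frac{1}{k}, m+1\right)$, where $B(a,b) = \frac{\Gamma(a)\Gamma(b)}{\Gamma(a+b)}$. The key tool I would use is the standard reduction of the Beta function at an integer second argument, namely $B(a, n+1) = \frac{n!}{a(a+1)\cdots(a+n)}$, which follows from $\Gamma(n+1)=n!$ together with repeated application of the functional equation $\Gamma(a+j+1) = (a+j)\Gamma(a+j)$.

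First I would compute $d^{\text{IP}}_k(1)$. Taking $a=\frac1k$ and $n=1$ gives $B\!\left(\frac1k,2\right) = \frac{1}{\frac1k\left(\frac1k+1\right)} = \frac{k^2}{k+1}$, so that $d^{\text{IP}}_k(1) = \frac1k \cdot \frac{k^2}{k+1} = \frac{k}{k+1}$. This recovers the well-known benchmark value quoted in the main text and establishes the rightmost equality in the statement.

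Next I would compute $d^{\text{IP}}_k(2)$. Taking $a=\frac1k$ and $n=2$ gives $B\!\left(\frac1k,3\right) = \frac{2}{\frac1k\left(\frac1k+1\right)\left(\frac1k+2\right)} = \frac{2k^3}{(k+1)(2k+1)}$, and hence $d^{\text{IP}}_k(2) = \frac1k \cdot \frac{2k^3}{(k+1)(2k+1)} = \frac{2k^2}{(k+1)(2k+1)}$. Finally I would verify the identity by combining the claimed right-hand side over a common denominator, $\frac{k}{k+1} - \frac{k}{(2k+1)(k+1)} = \frac{k(2k+1)-k}{(k+1)(2k+1)} = \frac{2k^2}{(k+1)(2k+1)}$, which coincides with the expression just derived for $d^{\text{IP}}_k(2)$.

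The argument is entirely routine, so there is no genuine obstacle---only careful bookkeeping. The one place demanding minor attention is clearing the nested fractions of the form $\frac1k + j$ correctly when reducing the Beta function; once those simplifications are handled, the two sides match immediately.
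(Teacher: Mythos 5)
Your proposal is correct and follows essentially the same route as the paper: both start from the closed form $d^{\text{IP}}_k(m)=\frac1k B\bigl(\frac1k,m+1\bigr)$ in Lemma~\ref{lemm:RealLife}, reduce the Beta function via the Gamma recurrence to get $d^{\text{IP}}_k(1)=\frac{k}{k+1}$ and $d^{\text{IP}}_k(2)=\frac{2k^2}{(2k+1)(k+1)}$, and verify the identity by combining fractions. The only cosmetic difference is that you package the recurrence as the standard formula $B(a,n+1)=\frac{n!}{a(a+1)\cdots(a+n)}$, while the paper unwinds $\Gamma\bigl(\frac1k+3\bigr)$ step by step.
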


\begin{proof}
Recalling that Lemma \ref{lemm:RealLife} implies $d^{\text{IP}}_k(2) = \frac1k B\left(\frac1k,3\right)$, let us first show
\begin{equation} \label{identity:2}
\frac{1}{k} B\left(\frac{1}{k},3\right) = \frac{k}{k+1} - \frac{k}{(2k+1)(k+1)}.
\end{equation}
We can rewrite
\begin{align*}
    \frac1k B\left(\frac1k, 3\right) & = \frac1k \frac{\Gamma\left(\frac1k\right) \Gamma(3)}{\Gamma\left(\frac1k +3 \right)} && \mbox{ by definition of the Beta function}\\
    & = \frac1k\frac{2\Gamma\left(\frac1k\right) }{\Gamma\left(\frac1k +3 \right)} && \mbox{ since $\Gamma(3)=2!$} \\
    & = \frac1k \frac{2\Gamma\left(\frac1k\right)}{\left(\frac1k +2\right)\left(\frac1k +1\right)\frac1k \Gamma\left(\frac1k\right)} && \mbox{ by the Gamma function recurrence relation} \\
    & = \frac{2k^2}{(2k+1)(k+1)}
\end{align*}
Since
\[
\frac{k}{k+1}-\frac{k}{(2k+1)(k+1)}=\frac{2k^2}{(2k+1)(k+1)}
\]
we have the desired identity in (\ref{identity:2}). By similar arguments,
\begin{align*}
    \frac1k B\left(\frac1k, 2\right) & = \frac1k \frac{\Gamma\left(\frac1k\right) \Gamma(2)}{\Gamma\left(\frac1k +2 \right)}  = \frac1k\frac{\Gamma\left(\frac1k\right) }{\Gamma\left(\frac1k +2 \right)}   = \frac1k \frac{\Gamma\left(\frac1k\right)}{\left(\frac1k +1\right)\frac1k \Gamma\left(\frac1k\right)}  = \frac{k}{k+1}
\end{align*}
so 
$d^{\text{IP}}_k(2) = d^{\text{IP}}_k(1) - \frac{k}{(2k+1)(k+1)}$. That is, in expectation the second in-person draw reduces the distance between Subject and Subject's match by $\frac{k}{(2k+1)(k+1)}$.
\end{proof}

\subsection{Supporting Results about AI-Mediated Search} \label{supp:AI}

This section reports the following results: First, the expected distance between Subject's target and individual $i$ is increasing in the distance between their AI representations (Lemma \ref{lemm:CondExp}).\footnote{Lemma \ref{lemm:Monotone} uses this to show that Subject's expected payoff given $n$ samples is monotonically increasing in $n$.} Thus for every finite $n$, Subject's expected payoff can be upper bounded by Subject's payoff in an idealized setting where some AI representation perfectly matches the representation of the target, henceforth denoted $d^{\text{AI}}_k(\infty)$ (Corollary \ref{corr:Infinite}). Lemma \ref{lemm:Gaussian} proves that $d^{\text{AI}}_k(\infty)$ is equal to the expectation of the norm of a multivariate Gaussian vector, conditional on that norm being less than 1. Finally, Proposition \ref{lemm:AI} asymptotically bounds the difference between $d^{\text{AI}}_k(\infty)$ and the benchmark distance $k/(k+1)$, and Lemma \ref{lemm:1} shows that $d^{\text{AI}}_k(\infty)<k/(k+1)$ for every $k$.

\begin{lemma} \label{lemm:CondExp} 
$\mathbb{E}[\| X_i \| :\| Y_{i0} - Y_{0i} \| = s]$ 
is increasing in $s$.
\end{lemma}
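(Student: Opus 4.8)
The plan is to recast the claim as a monotone likelihood ratio property (MLRP) for the pair of norms, and then to isolate a single analytic monotonicity fact about modified Bessel functions as the real content.

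First I would simplify the conditioning variable. Since $x_0=0$, we have $Y_{i0}-Y_{0i} = X_i + \varepsilon_{i0}-\varepsilon_{0i} = X_i + Z_i$ with $Z_i \sim \mathcal{N}(0,2\sigma^2 I_k)$ independent of $X_i$. Write $R=\|X_i\|$, $S=\|X_i+Z_i\|$, and $\tau^2=2\sigma^2$; the claim is that $\mathbb{E}[R\mid S=s]$ is increasing in $s$. I would deduce this from the statement that the joint density $f_{R,S}$ is totally positive of order $2$ (TP2), i.e. satisfies MLRP in $(r,s)$: TP2 implies that the conditional law of $R$ given $S=s$ is increasing in $s$ in the likelihood-ratio order, hence in first-order stochastic dominance, hence has increasing mean. (Restricting to the unit ball multiplies $f_{R,S}$ by $\mathbbm{1}(r\le 1)$, a function of $r$ alone, which preserves TP2; this is the only place the boundedness of the support enters.) So it suffices to prove TP2 of $f_{R,S}$.

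Next I would compute the joint density. Conditioning on $\|X_i\|=r$ makes $X_i$ uniform on the sphere of radius $r$, and by rotational invariance of $Z_i$ the law of $S$ depends only on $r$; in fact $S^2/\tau^2$ is noncentral chi-squared with $k$ degrees of freedom and noncentrality $\lambda=r^2/\tau^2$. Passing to radial coordinates and integrating out the angle between $X_i$ and $X_i+Z_i$ against the surface measure of $S^{k-1}$ gives, up to a multiplicative constant,
\[
f_{R,S}(r,s) \propto r^{k/2}\, s^{k/2}\, \exp\left(-\frac{r^2+s^2}{2\tau^2}\right) I_{k/2-1}\left(\frac{rs}{\tau^2}\right)\mathbbm{1}(r\le 1),
\]
where $I_\nu$ is the modified Bessel function of the first kind. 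For $s_2>s_1$ the ratio $f_{R,S}(r,s_2)/f_{R,S}(r,s_1)$, after cancelling every factor that depends on $r$ or on $s$ alone, is proportional to $I_{k/2-1}(r s_2/\tau^2)/I_{k/2-1}(r s_1/\tau^2)$. Hence TP2 reduces, with $\nu=k/2-1$ and $a_2>a_1>0$, to showing that $r\mapsto I_\nu(a_2 r)/I_\nu(a_1 r)$ is increasing.

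This last monotonicity is the crux, and the place where a coordinatewise MLRP argument genuinely fails, since $x\mapsto\|x\|$ is not monotone in the individual coordinates. I would prove it by differentiating the logarithm: $\frac{d}{dr}\log\frac{I_\nu(a_2 r)}{I_\nu(a_1 r)} = \frac{1}{r}\bigl(g(a_2 r)-g(a_1 r)\bigr)$ with $g(x)=x\,I_\nu'(x)/I_\nu(x)$, so it suffices to show $g$ is increasing on $(0,\infty)$; this follows from standard Bessel facts (the recurrence $I_\nu'(x)=I_{\nu+1}(x)+(\nu/x)I_\nu(x)$ together with Turán-type inequalities $I_\nu(x)^2\ge I_{\nu-1}(x)I_{\nu+1}(x)$). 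An equivalent and perhaps cleaner route is to use the Poisson-mixture representation of the noncentral chi-squared law: the mixing weights form an MLR family in $\lambda$ (Poisson in its mean), the central chi-squared densities are MLR-ordered in their degrees of freedom, and Karlin's composition theorem for TP2 kernels then yields MLRP of the noncentral chi-squared in $\lambda$; since $r\mapsto r^2/\tau^2$ is increasing, this gives MLRP in $r$. I expect this Bessel/total-positivity step to be the main obstacle: the rotational-symmetry computation that precedes it is bookkeeping, and the passage MLRP $\Rightarrow$ FOSD $\Rightarrow$ monotone conditional mean that follows it is standard.
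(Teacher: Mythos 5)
Your proposal is correct, and its skeleton coincides with the paper's proof: both reduce the claim to TP2/MLRP of the joint density of $(R,S)=\bigl(\|X_i\|,\|X_i+Z_i\|\bigr)$, compute that density via the noncentral chi (or chi-squared) law, note that factors depending on $r$ alone or $s$ alone cancel from the likelihood ratio (your remark that the indicator $\mathbbm{1}(r\le 1)$ is such a factor handles the bounded support, as in the paper), and then reduce everything to a monotonicity property of the modified Bessel function $I_\nu$ with $\nu=k/2-1$. The interesting difference is in how that last step is closed, and your version is in fact the sharper one. The paper signs the cross-partial $\frac{\partial^2}{\partial r\,\partial s}\ln I_\nu(rs/\sigma^2)$ by splitting it into two terms and invoking (i) $I_\nu'/I_\nu\ge 0$ and (ii) log-convexity of $I_\nu$ on $(0,\infty)$; but (ii) is stronger than needed and does not hold near the origin when $\nu>0$, where $(\ln I_\nu)''(z)\sim -\nu/z^2<0$, so the term-by-term signing is delicate---the two singular parts must be combined. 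What the computation actually requires is precisely $\frac{d}{dz}\left[z(\ln I_\nu)'(z)\right]\ge 0$, i.e.\ that $g(z)=zI_\nu'(z)/I_\nu(z)$ is increasing (geometric convexity of $I_\nu$), which is exactly the reduction you make. Your sketch of its proof also checks out: writing $g(z)=\nu+zI_{\nu+1}(z)/I_\nu(z)$, the recurrences turn the numerator of $g'$ into $z\bigl(I_\nu(z)^2-I_{\nu-1}(z)I_{\nu+1}(z)\bigr)$, which is nonnegative by the Tur\'an-type inequality for $\nu\ge -1/2$ (satisfied since $k\ge 1$). Finally, your alternative route---noncentral chi-squared as a Poisson mixture of central chi-squared densities, with TP2 of the Poisson kernel in $(\lambda,j)$ and of the central densities in $(j,x)$, composed via Karlin's basic composition theorem---is genuinely different from anything in the paper and arguably cleaner, since it avoids Bessel analysis entirely; either of your two routes yields a complete proof.
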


\begin{proof}
Fix an arbitrary individual $i$ and define $Z_i \equiv \varepsilon_{i0} - \varepsilon_{0i} \sim \mathcal{N}(0,2\sigma^2I_k)$. Let
  $R \equiv \|X\|$ and $
  S \equiv \|X+Z\|$, where I drop the $i$ subscript throughout. I will show that the joint density $g_{R,S}$ satisfies the monotone likelihood ratio property, i.e., 
\begin{equation} \label{eq:MLRP}
  g_{R,S}(r,s)\,g_{R,S}(r',s') > g_{R,S}(r,s')\,g_{R,S}(r',s)
\end{equation}
for every $1\geq r> r' \geq 0$ and $s> s' \geq 0$.
It is well known that (\ref{eq:MLRP}) implies that $E(R \mid S=s)$ is increasing in $s$ (see e.g., \citet{Milgrom1981}).

Towards demonstrating (\ref{eq:MLRP}), I will first derive a closed-form expression for each of $g_R(r)$ and $g_{S \mid R=r}(s)$. Since $X$ is uniformly distributed on the unit ball, its norm $R$ has distribution function
\[G_R(r)=\mathbb{P}(R\le r) = \frac{V_{r,k}}{V_{1,k}}=r^k \quad \forall r\in [0,1]\]
 where $V_{r,k} = \frac{\pi^{k/2} r^k}{\Gamma\left(\frac{k}{2}+1\right)}$ is the volume of a ball with radius $r$. Differentiating yields
\begin{equation} \label{eq:fR}
  g_R(r) = k r^{k-1} \quad \forall r\in [0,1].
\end{equation}

Next turn to $g_{S|R=r}$. Conditional on $R=r$, we can write $X=r u$ for some unit vector $u$ on the $k$-dimensional unit sphere.
Hence
\[ S = \| X+\varepsilon \| = \| r u + \varepsilon \|\]
where the direction $u$ is irrelevant for the distribution of $S$. 

Define $T = \frac{S}{\sigma}$. Then $T \mid (R=r) = \| \frac{ru}{\sigma} + \frac{\varepsilon}{\sigma} \|$ has a noncentral chi distribution with dimension $k$ and mean vector $\mu = \frac{ru}{\sigma}$.
The pdf for such a distribution is 
\begin{equation} \label{eq:densityfT}
  h_T(t) = \frac{1}{\lambda^{\nu}} t^{\nu+1} \exp\left[-\frac{t^2 + \lambda^2}{2}\right] I_{\nu}(\lambda t)
\end{equation}
 where $\lambda = \|\mu\|=\frac{r}{\sigma}$, $\nu=\frac{k}{2}-1$,  and $I_{\nu}$ is the modified Bessel function of the first kind.
Applying the change of variable $S=\sigma T$ we have
\[g_{S \mid R=r}(s)  = \frac{1}{\sigma} h_T\!\left(\frac{s}{\sigma}\right)
\]
for all $s\geq 0$. Substituting from (\ref{eq:densityfT}) with $t=s/\sigma$, $\lambda=r/\sigma$, and $\nu=\frac{k}{2}-1$ yields
\begin{align}
  g_{S \mid R=r}(s) &=  \frac{1}{\sigma^{\nu+2}}
  \left(\frac{s^{\nu+1}}{r^\nu}\right)
  \exp\!\left[-\frac{r^2+s^2}{2\sigma^2}\right]
  I_{\nu}\!\left(\frac{rs}{\sigma^2}\right).
  \label{eq:gSgivenR}
\end{align}

Together with (\ref{eq:fR}), the joint density of $(R,S)$ is therefore
\begin{align}
  g_{R,S}(r,s)  &=  g_R(r)\, g_{S \mid R=r}(s) \nonumber \\
  &=  \frac{k}{\sigma^{k/2+1}} r^{k/2}  s^{k/2-1}
  \exp\!\left[-\frac{r^2+s^2}{2\sigma^2}\right]
  I_{\frac{k}{2}-1}\!\left(\frac{rs}{\sigma^2}\right).
  \label{eq:gRS}
\end{align}

A sufficient condition for (\ref{eq:MLRP}) is that $g_{R,S}$ is log-supermodular \citep{athey2002monotone}. Define
\begin{align}
  \phi(r,s)  &\equiv \ln g_{R,S}(r,s) \nonumber \\
  &=  \ln\frac{k}{\sigma^{k/2+1}} + \frac{k}{2}\ln r
  + \left(\frac{k}{2}-1\right)\ln s - \frac{r^2+s^2}{2\sigma^2} + \ln I_{\frac{k}{2}-1}\!\left(\frac{rs}{\sigma^2}\right).
  \label{eq:phi}
\end{align}
Since only the final term of (\ref{eq:phi}) has a nonzero cross-partial,
\[\frac{\partial^2}{\partial r \partial s}\phi(r,s) = \frac{\partial^2}{\partial r \partial s} \ln\left(I_{\frac{k}{2}}\left(\frac{rs}{\sigma^2}\right)\right)\]
Let $z=r s/\sigma^2$ and again denote $\nu=\frac{k}{2}-1$. Then
\[\frac{\partial}{\partial r}\ln I_{\nu}(z)= \frac{I_{\nu}'(z)}{I_{\nu}(z)}\cdot \frac{s}{\sigma^2}\]
  and so
  \begin{align*}
  \frac{\partial^2}{\partial r \partial s} \ln\left(I_{\nu}\left(\frac{rs}{\sigma^2}\right)\right) = \frac{\partial}{\partial s}\left[\frac{I_{\nu}'(z)}{I_{\nu}(z)}\cdot \frac{s}{\sigma^2}\right]
  & =\frac{\partial}{\partial s}\left[\frac{I_{\nu}'(z)}{I_{\nu}(z)}\right] \cdot \frac{s}{\sigma^2} + \frac{I_{\nu}'(z)}{I_{\nu}(z)}\cdot \frac{1}{\sigma^2} \\
  & = \frac{\partial}{\partial z}\left[\frac{I_{\nu}'(z)}{I_{\nu}(z)}\right] \cdot \frac{rs}{\sigma^4} + \frac{I_{\nu}'(z)}{I_{\nu}(z)}\cdot \frac{1}{\sigma^2}
\end{align*}
We can now sign the RHS using properties of $I_{\nu}$, the modified Bessel function of the first kind (see e.g., \citet{Baricz2010}). First, $I_{\nu}'(z)\geq 0$ and $I_{\nu}(z)\geq 0$ for all $z\geq 0$ (which is implied by $r,s\geq 0$); thus, $\frac{I_{\nu}'(z)}{I_{\nu}(z)}\geq 0 $. Moreover, for $\nu\ge-1/2$ (which is satisfied in our setting since $k\geq 1$), the function $I_{\nu}$ is log-convex on $(0,\infty)$, i.e., $\frac{d}{dz}\left[\frac{I_{\nu}'(z)}{I_{\nu}(z)}\right]\ge0$. Thus the entire RHS is positive, so $g(r,s)=\exp(\phi(r,s))$ is log-supermodular as desired. 
\end{proof}

\begin{corollary} \label{corr:Infinite} Define $d^{\text{AI}}_k(\infty) = E\left[ \| X_i \| : \|Y_{i0} - Y_{0i} \|=0\right].$
Then 
$d^{\text{AI}}_k(\infty) \leq d^{\text{AI}}_k(n)$ for all $ n \in \mathbb{Z}_+.$
\end{corollary}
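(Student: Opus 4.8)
The plan is to reduce the claim to the monotonicity established in Lemma \ref{lemm:CondExp} via a conditioning-on-the-selected-distance argument. Write $R_i \equiv \|X_i\|$ and $S_i \equiv \|Y_{i0}-Y_{0i}\|$, so that the pairs $(R_i,S_i)$ are i.i.d.\ across $i$, and let $g(s) \equiv \mathbb{E}[R_i \mid S_i = s]$ denote the common conditional-expectation function, which Lemma \ref{lemm:CondExp} shows is increasing in $s$. By definition $d^{\text{AI}}_k(\infty) = g(0)$ and $d^{\text{AI}}_k(n) = \mathbb{E}[R_{i^*_n}]$, where $i^*_n = \argmin_{1 \le i \le n} S_i$. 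The whole argument turns on showing that selecting the argmin does not bias the associated $R$-value beyond what is explained by the realized minimal distance.

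First I would establish the key identity $\mathbb{E}[R_{i^*_n} \mid S_{i^*_n} = s] = g(s)$. Although $i^*_n$ is chosen to minimize $S_i$, the selection acts only through the $S$-coordinates: conditioning on $\{i^*_n = j,\ S_j = s\}$ is the same as conditioning on $\{S_j = s\} \cap \{S_i > s \text{ for all } i \ne j\}$ (ties occur with probability zero, since the $S_i$ admit a density, as seen in the proof of Lemma \ref{lemm:CondExp}). Because the pairs $(R_i, S_i)$ are independent across $i$, the coordinate $R_j$ is conditionally independent of $\{S_i : i \ne j\}$ given $S_j$; hence the minimality constraints $\{S_i > s,\ i \ne j\}$ impose no further restriction on the law of $R_j$ once $S_j = s$ is fixed. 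Thus the conditional distribution of $R_{i^*_n}$ given $S_{i^*_n} = s$ coincides with the law of $R$ given $S = s$, and summing over the (symmetric) events $\{i^*_n = j\}$ yields $\mathbb{E}[R_{i^*_n} \mid S_{i^*_n} = s] = g(s)$.

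Given this identity, the conclusion follows by the tower property together with pointwise monotonicity. Since $g$ is increasing and $S_{i^*_n} \ge 0$ always, we have $g(S_{i^*_n}) \ge g(0)$ pointwise; taking expectations gives
\[
d^{\text{AI}}_k(n) = \mathbb{E}[R_{i^*_n}] = \mathbb{E}\big[\,g(S_{i^*_n})\,\big] \ge g(0) = d^{\text{AI}}_k(\infty),
\]
which is the desired bound, valid for every finite $n$.

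The main obstacle is the selection step of the second paragraph: one must argue carefully that conditioning on an index being the argmin biases the associated $R$-value only through the realized minimum distance $S_{i^*_n}$, and in no other way. This is exactly where the cross-$i$ independence of the pairs $(R_i, S_i)$ is essential, since it guarantees that the constraints forcing $j$ to be the minimizer involve only the other draws' $S$-coordinates and are therefore independent of $R_j$ once $S_j$ is held fixed. With that in hand, the increasingness of $g$ from Lemma \ref{lemm:CondExp} immediately delivers the uniform lower bound.
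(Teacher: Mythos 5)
Your proof is correct and takes essentially the same route as the paper's: both express $d^{\text{AI}}_k(n)$ as the expectation of the conditional-expectation function $g(s)=\mathbb{E}[\|X_i\| \mid \|Y_{i0}-Y_{0i}\|=s]$ evaluated at the realized minimum clone distance, and then combine the monotonicity from Lemma \ref{lemm:CondExp} with the fact that this minimum is almost surely nonnegative. The only difference is that you spell out the selection step---that conditioning on an index being the argmin affects $R_{i^*_n}$ only through $S_{i^*_n}$, by cross-$i$ independence of the pairs---which the paper's short proof leaves implicit (deferring such reasoning to the argument in Lemma \ref{lemm:Monotone}).
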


\begin{proof}
Define $W=\min_{1\leq j \leq n} \| Y_{j0} - Y_{0j}\|$. Then  
\[d_k^{\text{AI}} (n) = E\left[\| X_i \| : \|Y_{i0}-Y_{0i} \| = W \right]\]
where clearly $P(W \geq 0)=1$. So
\[d_k^{\text{AI}} (n) \geq E\left[\| X_i \| : \|Y_{i0}-Y_{0i} \| = 0 \right]\]
follows by Lemma \ref{lemm:CondExp}.
\end{proof}

\begin{lemma} \label{lemm:Gaussian} For each $i$, $E\left[\| X_i \| : \|Y_{i0} - Y_{0i}\|=0 \right] = E[ \|Z \| : \| Z \| \leq 1 ]$ where $Z \sim \mathcal{N}(0, 2\sigma^2 I_k)$ and $I_k$ is the identity matrix in $k$ dimensions.
\end{lemma}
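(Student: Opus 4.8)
The plan is to identify the conditional law of $X_i$ given the perfect clone-match event $\{\|Y_{i0}-Y_{0i}\|=0\}$ explicitly, and to recognize it as the law of a Gaussian vector truncated to the unit ball. First I would recall from the setup that $\|Y_{i0}-Y_{0i}\|=\|X_i+Z_i\|$, where $Z_i\equiv\varepsilon_{i0}-\varepsilon_{0i}\sim\mathcal{N}(0,2\sigma^2 I_k)$ is independent of $X_i$. Since a norm vanishes precisely when its argument does, the conditioning event $\{\|Y_{i0}-Y_{0i}\|=0\}$ coincides with $\{X_i+Z_i=0\}$, i.e.\ with $\{Z_i=-X_i\}$. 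Dropping the index $i$ as in the proof of Lemma~\ref{lemm:CondExp}, the task reduces to computing $\mathbb{E}[\|X\| : X+Z=0]$.

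Next I would compute the conditional density of $X$ given $V\equiv X+Z=0$. By independence, conditional on $X=x$ the vector $V=x+Z$ has density $f_Z(v-x)$, so the joint density of $(X,V)$ is $f_{X,V}(x,v)=f_X(x)\,f_Z(v-x)$. Evaluating at $v=0$ and using that the Gaussian density $f_Z$ is even gives $f_X(x)\,f_Z(x)$. Since $X$ is uniform on the unit ball, $f_X(x)=\tfrac{1}{V_{1,k}}\mathbbm{1}(x\in B_k)$ is constant on $B_k$, so after normalizing, the conditional density of $X$ given $V=0$ is
\[
f_{X\mid V=0}(x)=\frac{\mathbbm{1}(x\in B_k)\,f_Z(x)}{\int_{B_k} f_Z(x')\,dx'}.
\]
This is exactly the density of $Z\sim\mathcal{N}(0,2\sigma^2 I_k)$ restricted and renormalized to the unit ball, i.e.\ the law of $Z$ conditional on $\|Z\|\le 1$. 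Taking the expectation of $\|\cdot\|$ under this common law then yields $\mathbb{E}[\|X\| : X+Z=0]=\mathbb{E}[\|Z\| : \|Z\|\le 1]$, which is the claim.

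The main obstacle to handle rigorously is that the conditioning event $\{X+Z=0\}$ has probability zero, so $\mathbb{E}[\|X_i\| : \|Y_{i0}-Y_{0i}\|=0]$ must be read as a conditional expectation obtained by disintegration rather than by elementary conditioning. I would make this precise via the density argument sketched above---formally, the regular conditional distribution of $X$ given $V$ has $x$-density proportional to $f_X(x)f_Z(v-x)$, which is continuous in $v$ at $v=0$---so that the displayed conditional density is the legitimate value at $v=0$. If a more self-contained justification is preferred, one can instead compute $\mathbb{E}[\|X\| : \|X+Z\|\le\delta]$ as a ratio of integrals of $f_X(x)f_Z(v-x)$ over the relevant regions and let $\delta\to 0$, with continuity of the integrand at $v=0$ delivering the same limit. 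In either route, the two facts that make the cancellation clean are the even symmetry of $f_Z$ and the constancy of $f_X$ on $B_k$.
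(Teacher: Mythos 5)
Your proposal is correct and follows essentially the same route as the paper's proof: rewrite the perfect-match event as $\{Z_i=-X_i\}$, apply Bayes' rule to get a conditional density proportional to $f_X(x)f_Z(x)$, and use the constancy of $f_X$ on $B_k$ to identify this with the law of $Z$ truncated to the unit ball. Your added care about the measure-zero conditioning event (disintegration, or the $\delta\to 0$ limit) is a rigor refinement of the same argument, not a different approach.
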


\begin{proof} Recall that $Y_{i0}=Y_{0i}$ is equivalent to 
$X_{i} + \varepsilon_{i0} = x_0 + \varepsilon_{0i}$, 
or more simply
\[X_i = Z_i\]
where $Z_i = \varepsilon_{0i}-\varepsilon_{i0} \sim \mathcal{N}(0,2\sigma^2I_k)$.
By Bayes’ rule, the posterior density of $X_i$ conditional on the event $\{X_i =  Z_i\}$ satisfies
\[
f_{X_i|Z_i=X_i}(x) =\frac{f_{X_i}(x)f_{Z_i}(x)}{\int_{x\in B_k} f_{X_i}(x)f_{Z_i}(x)dx}
\] Since $f_{X_i}(x)$ is constant on $B_k$ and zero elsewhere, this further simplifies to
\[
f_{X_i|Z_i=X_i}(x) =\frac{f_{Z_i}(x)}{\int_{x\in B_k}f_{Z_i}(x)dx}
\] 
That is, 
$
X_i \mid (Y_{i0}=Y_{0i}) \stackrel{d}{=} Z_i \mid \{\|Z_i\|\le 1\}$ which further implies
\[
E[\|X_i\| : Y_{i0}=Y_{0i}] = E[\|Z_i\| : \|Z_i\|\le 1]
\]
as desired.
\end{proof}

\begin{lemma} \label{lemm:AI} 
$d^{\text{AI}}_k(\infty) = \frac{k}{k+1} + o\left(\frac1k\right)$
\end{lemma}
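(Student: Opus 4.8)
The plan is to first apply Lemma \ref{lemm:Gaussian} to replace $d^{\text{AI}}_k(\infty)$ with $\mathbb{E}[\|Z\| : \|Z\|\le 1]$ for $Z\sim\mathcal{N}(0,2\sigma^2 I_k)$, and then collapse this to a one-dimensional radial computation. The radial density of $Z$ is proportional to $r^{k-1}\exp(-r^2/(4\sigma^2))$, so writing $c=\tfrac{1}{4\sigma^2}$ and $J_m=\int_0^1 r^m e^{-cr^2}\,dr$, conditioning on $\|Z\|\le 1$ gives the exact identity
\[
d^{\text{AI}}_k(\infty)=\frac{\int_0^1 r^{k} e^{-cr^2}\,dr}{\int_0^1 r^{k-1} e^{-cr^2}\,dr}=\frac{J_k}{J_{k-1}}.
\]
The benchmark $d^{\text{IP}}_k(1)=\tfrac{k}{k+1}$ is precisely the same ratio with the Gaussian weight deleted, namely $\int_0^1 r^k\,dr/\int_0^1 r^{k-1}\,dr$. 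So the task reduces to showing that reinstating the smooth weight $e^{-cr^2}$ perturbs this ratio by only $o(1/k)$.

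The key device I would use is an integration-by-parts identity rather than a head-on asymptotic evaluation of $J_k$ and $J_{k-1}$. Differentiating $r^{m+1}e^{-cr^2}$ and integrating over $[0,1]$ (the boundary term vanishes at $r=0$ since $k\ge 1$, and equals $e^{-c}$ at $r=1$) yields
\[
e^{-c}=(m+1)J_m-2c\,J_{m+2},\qquad\text{i.e.}\qquad J_m=\frac{e^{-c}+2c\,J_{m+2}}{m+1}.
\]
Applying this at $m=k$ and $m=k-1$ and taking the quotient exposes the desired factor algebraically:
\[
d^{\text{AI}}_k(\infty)=\frac{J_k}{J_{k-1}}=\frac{k}{k+1}\cdot\frac{e^{-c}+2c\,J_{k+2}}{e^{-c}+2c\,J_{k+1}}.
\]
It then remains only to show the last factor is $1+o(1/k)$.

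To control that factor I would bound the tail integrals crudely using $e^{-cr^2}\le 1$ and $r\le 1$. First, $J_{k+1}\le\int_0^1 r^{k+1}\,dr=\tfrac{1}{k+2}$, so the denominator stays trapped in $[e^{-c},\,e^{-c}+2c]$, bounded away from zero because $\sigma>0$ gives $c<\infty$ and hence $e^{-c}>0$. Second,
\[
0\le J_{k+1}-J_{k+2}=\int_0^1 r^{k+1}(1-r)e^{-cr^2}\,dr\le\int_0^1 r^{k+1}(1-r)\,dr=B(k+2,2)=\frac{1}{(k+2)(k+3)}.
\]
Consequently the correction factor equals $1+\frac{2c(J_{k+2}-J_{k+1})}{e^{-c}+2cJ_{k+1}}$, whose deviation from $1$ is $O(1/k^2)$ (and negative, incidentally recovering $d^{\text{AI}}_k(\infty)<\tfrac{k}{k+1}$). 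Multiplying by $\tfrac{k}{k+1}\le 1$ gives $d^{\text{AI}}_k(\infty)=\tfrac{k}{k+1}+O(1/k^2)=\tfrac{k}{k+1}+o(1/k)$, as claimed.

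The main obstacle here is conceptual rather than computational. The naive route of estimating $J_k$ and $J_{k-1}$ separately by Laplace's method (substituting $r=1-w/k$) forces one to track correction terms to $o(1/k)$ precision inside each integral, which is delicate and error-prone. The integration-by-parts recursion sidesteps this entirely by pulling out the $\tfrac{k}{k+1}$ factor exactly, reducing the whole estimate to the elementary Beta-function bound on the single difference $J_{k+1}-J_{k+2}$. The only points warranting care are verifying the boundary term in the integration by parts and confirming the denominator is bounded below uniformly in $k$.
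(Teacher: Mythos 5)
Your proposal is correct, and while it shares its setup with the paper---both reduce, via Lemma \ref{lemm:Gaussian} and the radial density of a centered Gaussian, to the ratio $\int_0^1 r^{k}e^{-cr^2}dr \big/ \int_0^1 r^{k-1}e^{-cr^2}dr$ with $c=\tfrac{1}{4\sigma^2}$---your treatment of the asymptotics is genuinely different. The paper proceeds by a Laplace-type boundary-layer argument: it splits each integral at $1-\delta_k$ with $\delta_k=\tfrac{\ln k}{k}$, bounds the bulk contribution by $(1-\delta_k)^{k+1}/(k+1)$, approximates $e^{-cr^2}\approx e^{-c}$ on $[1-\delta_k,1]$, and arrives at $I_1(k)=\tfrac{e^{-c}}{k+1}+o\left(\tfrac1k\right)$ and $I_2(k)=\tfrac{e^{-c}}{k}+o\left(\tfrac1k\right)$ before taking the quotient. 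Your integration-by-parts recursion
\[
J_m=\frac{e^{-c}+2c\,J_{m+2}}{m+1}
\]
sidesteps that machinery: it extracts the factor $\tfrac{k}{k+1}$ \emph{exactly}, leaving only the elementary bound $0\le J_{k+1}-J_{k+2}\le \tfrac{1}{(k+2)(k+3)}$ together with the uniform lower bound $e^{-c}>0$ on the denominator. This buys three things. First, it is shorter and avoids the delicate choice of $\delta_k$ and the bookkeeping of which error terms are $o(1/k)$. Second, it yields a strictly sharper conclusion, $d^{\text{AI}}_k(\infty)=\tfrac{k}{k+1}+O\left(\tfrac{1}{k^2}\right)$, whereas the paper only certifies $o\left(\tfrac1k\right)$ (which is all Theorem \ref{thm:2} needs). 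Third, since $J_{k+1}-J_{k+2}>0$ strictly, your correction factor is strictly below $1$, so you recover $d^{\text{AI}}_k(\infty)<\tfrac{k}{k+1}$ as a free byproduct---a fact the paper proves separately as Lemma \ref{lemm:1} via a likelihood-ratio/stochastic-dominance comparison with a Beta$(k,1)$ variable. The only points requiring care in your argument---the vanishing of the boundary term at $r=0$ (which holds since $k\ge 1$) and the uniform-in-$k$ lower bound on the denominator---are exactly the ones you flagged and handled.
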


\begin{proof}
Let $Z\sim N(0,2\sigma^2 I_k)$ and $R=\|Z\|$. 
Since $W=\frac{R^2}{2\sigma^2}\sim\chi^2_k$ has density
\[
f_W(w)=\frac{1}{2^{k/2}\Gamma(\frac{k}{2})}w^{\frac{k}{2}-1}e^{-w}\]
the change of variables
$r=\sqrt{2\sigma^2 w}$ yields
\begin{equation}\label{eq:fr}
f_R(r) = \frac{2^{1-k}}{\Gamma(\frac{k}{2}) \sigma^k} r^{k-1} \exp\!\left(-\frac{r^2}{4\sigma^2}\right) \quad \forall r\geq 0.
\end{equation}
Therefore
\[\mathbb{E}[ R \mid R \leq 1 ]
= \frac{\int_0^1 r f_R(r)\,dr}{\int_0^1 f_R(r)\,dr}
= \frac{\int_0^1 r^{k} \exp\!\left(-\frac{r^2}{4\sigma^2}\right) dr}{\int_0^1 r^{k-1} \exp\!\left(-\frac{r^2}{4\sigma^2}\right) dr}.\]
Define
\[I_1(k) := \int_0^1 r^{k} \exp\!\left(-\frac{r^{2}}{4\sigma^{2}}\right)dr,
\qquad
I_2(k) := \int_0^1 r^{k-1} \exp\!\left(-\frac{r^{2}}{4\sigma^{2}}\right)dr.\]
We will show
\begin{equation}\label{eq:AsymptoticRatio}
\frac{I_1(k)}{I_2(k)}=\frac{k}{k+1}+O\!\left(\frac{1}{k^2}\right)
\end{equation}

Define $g(r):=\exp\!\left(-\frac{r^2}{4\sigma^2}\right)$ on the domain $[0,1]$. Note that $g\in C^2([0,1])$ and
\[
g(1)=e^{-\frac{1}{4\sigma^2}},\qquad
g'(r)=-\frac{r}{2\sigma^2}e^{-\frac{r^2}{4\sigma^2}},\qquad
g'(1)=-\frac{1}{2\sigma^2}e^{-\frac{1}{4\sigma^2}}.
\]
Moreover, $g''$ is continuous on $[0,1]$, so $M:=\sup_{r\in[0,1]}|g''(r)|<\infty$.
For each integer $k\ge 1$, integrate by parts with $u=g(r)$ and $dv=r^k\,dr$:
\begin{align*}
\int_0^1 r^k g(r)\,dr & =\left[\frac{r^{k+1}}{k+1}g(r)\right]_0^1-\frac{1}{k+1}\int_0^1 r^{k+1}g'(r)\,dr \\
& =\frac{g(1)}{k+1}-\frac{1}{k+1}\int_0^1 r^{k+1}g'(r)\,dr.
\end{align*}
Apply integration by parts again to the remaining integral, with $u=g'(r)$ and $dv=r^{k+1}\,dr$:
\begin{align*}
\int_0^1 r^{k+1}g'(r)\,dr & =\left[\frac{r^{k+2}}{k+2}g'(r)\right]_0^1-\frac{1}{k+2}\int_0^1 r^{k+2}g''(r)\,dr \\
& =\frac{g'(1)}{k+2}-\frac{1}{k+2}\int_0^1 r^{k+2}g''(r)\,dr.
\end{align*}
Combining these expressions yields
\[\int_0^1 r^k g(r)\,dr
=\frac{g(1)}{k+1}-\frac{g'(1)}{(k+1)(k+2)}+\frac{1}{(k+1)(k+2)}\int_0^1 r^{k+2}g''(r)\,dr.\]
The remainder is bounded as
\[\left|\int_0^1 r^{k+2}g''(r)\,dr\right|\le M\int_0^1 r^{k+2}\,dr=\frac{M}{k+3},\]
so
\begin{equation}\label{eq:Ikexpansion}
\int_0^1 r^k g(r)\,dr
=\frac{g(1)}{k+1}-\frac{g'(1)}{(k+1)(k+2)}+O\!\left(\frac{1}{k^3}\right).
\end{equation}

Thus
\begin{align*}
I_1(k)&=\frac{g(1)}{k+1}-\frac{g'(1)}{(k+1)(k+2)}+O\!\left(\frac{1}{k^3}\right) \\
I_2(k)&=\frac{g(1)}{k}-\frac{g'(1)}{k(k+1)}+O\!\left(\frac{1}{k^3}\right)
\end{align*}
Factor out $g(1)=e^{-1/(4\sigma^2)}$ and define $A:= -\frac{g'(1)}{g(1)}=\frac{1}{2\sigma^2}$. Then
\begin{align*}
I_1(k)&=g(1)\left(\frac{1}{k+1}+\frac{A}{(k+1)(k+2)}+O\!\left(\frac{1}{k^3}\right)\right) \\
I_2(k)&=g(1)\left(\frac{1}{k}+\frac{A}{k(k+1)}+O\!\left(\frac{1}{k^3}\right)\right)
\end{align*}
Their ratio simplifies to
\[\frac{I_1(k)}{I_2(k)}
=\frac{k+\frac{Ak}{k+2}+O(\frac{1}{k})}{(k+1)+A+O(\frac{1}{k})}.\]
Since $\frac{Ak}{k+2}=A+O(1/k)$, the numerator is $k+A+O(1/k)$ and the denominator is $k+1+A+O(1/k)$, hence
\[\frac{I_1(k)}{I_2(k)}
=\frac{k+A+O(\frac{1}{k})}{k+1+A+O(\frac{1}{k})}
=\frac{k}{k+1}+O\!\left(\frac{1}{k^2}\right).\]
This establishes \eqref{eq:AsymptoticRatio}, and therefore
\[\mathbb{E}[R\mid R\le 1]=\frac{I_1(k)}{I_2(k)}=\frac{k}{k+1}+o\!\left(\frac{1}{k}\right),\]
as desired.
\end{proof}

\begin{lemma} \label{lemm:1} For every positive integer $k$,
$d^{\text{AI}}_k(\infty) < \frac{k}{k+1}.$
\end{lemma}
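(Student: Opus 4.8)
The plan is to reduce the claim to the statement that reweighting the law of $\|X\|$ by a strictly decreasing function strictly lowers its mean. By Lemma~\ref{lemm:Gaussian}, $d^{\text{AI}}_k(\infty)=\mathbb{E}[\,\|Z\| : \|Z\|\le 1\,]$ for $Z\sim\mathcal N(0,2\sigma^2 I_k)$, and (exactly as in the proof of Lemma~\ref{lemm:AI}) this equals the ratio
\[
d^{\text{AI}}_k(\infty)=\frac{\int_0^1 r\, r^{k-1} w(r)\,dr}{\int_0^1 r^{k-1} w(r)\,dr},
\qquad w(r):=\exp\!\left(-\frac{r^2}{4\sigma^2}\right).
\]
On the other hand, by \eqref{eq:fR} the norm $R=\|X\|$ of a uniform draw $X$ from $B_k$ has density $k r^{k-1}$ on $[0,1]$, so the benchmark is $\tfrac{k}{k+1}=d^{\text{IP}}_k(1)=\mathbb{E}[R]$ computed with respect to the probability measure $\nu(dr):=k r^{k-1}\,dr$ on $[0,1]$.

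Next I would put both quantities on the common measure $\nu$. Since the normalizing constants cancel,
\[
d^{\text{AI}}_k(\infty)=\frac{\mathbb{E}_\nu[R\,w(R)]}{\mathbb{E}_\nu[w(R)]},
\qquad
\frac{k}{k+1}=\mathbb{E}_\nu[R].
\]
Hence the desired strict inequality $d^{\text{AI}}_k(\infty)<\tfrac{k}{k+1}$ is equivalent to $\mathbb{E}_\nu[R\,w(R)] < \mathbb{E}_\nu[R]\,\mathbb{E}_\nu[w(R)]$, i.e. to $\mathrm{Cov}_\nu(R,w(R))<0$. This formalizes the intuition that the decreasing Gaussian weight $w$ tilts mass toward smaller radii, pulling the conditional mean below the uniform benchmark.

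To establish the strict covariance inequality I would use the symmetrization identity: if $R,R'$ are i.i.d. with law $\nu$, then
\[
\mathrm{Cov}_\nu(R,w(R))=\tfrac12\,\mathbb{E}\big[(R-R')\big(w(R)-w(R')\big)\big].
\]
Because $w$ is strictly decreasing on $[0,1]$, the integrand $(R-R')(w(R)-w(R'))$ is nonpositive for every realization and strictly negative whenever $R\neq R'$. Since $\nu$ admits a density, $P(R\neq R')=1$, so the expectation is strictly negative, giving $\mathrm{Cov}_\nu(R,w(R))<0$ and therefore the claim.

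The argument is essentially a one-line correlation inequality once the two expressions are written against the common measure $\nu$; the only point requiring care is the \emph{strictness}, which is why I would invoke the symmetrization identity rather than a generic Chebyshev or FKG statement (which would only yield a weak inequality). I expect no genuine obstacle beyond confirming the ratio-of-integrals representation of $d^{\text{AI}}_k(\infty)$, but that representation has already been derived in the proof of Lemma~\ref{lemm:AI}, so it can simply be reused here.
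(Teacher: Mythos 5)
Your proof is correct, and it closes the argument by a different mechanism than the paper. Both proofs start from the same reduction: by Lemma~\ref{lemm:Gaussian} and the density computation in Lemma~\ref{lemm:AI}, $d^{\text{AI}}_k(\infty)$ is the mean of the law on $[0,1]$ with density proportional to $r^{k-1}w(r)$, $w(r)=\exp\left(-\frac{r^2}{4\sigma^2}\right)$, while $\frac{k}{k+1}$ is the mean of the Beta$(k,1)$ density $kr^{k-1}$ from \eqref{eq:fR}. At that point the paper notes that the likelihood ratio of the two densities is proportional to $w(r)$, hence strictly decreasing, and invokes the chain MLRP $\Rightarrow$ first-order stochastic dominance $\Rightarrow$ ordering of means. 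You instead recast the inequality as $\mathrm{Cov}_\nu(R,w(R))<0$ under the common Beta$(k,1)$ measure $\nu$ and prove it via the symmetrization identity $\mathrm{Cov}_\nu(R,w(R))=\tfrac12\,\mathbb{E}\bigl[(R-R')\bigl(w(R)-w(R')\bigr)\bigr]$, whose integrand is strictly negative off the null event $\{R=R'\}$ because $w$ is strictly decreasing and $\nu$ is atomless. Your route is more elementary and entirely self-contained, and it handles the one delicate point---strictness---transparently: the paper's assertion that $E[T]<E[B]$ follows from dominance implicitly relies on the likelihood ratio being \emph{strictly} decreasing, a step your symmetrization makes explicit. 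What the paper's argument buys in exchange is a stronger conclusion (the Beta$(k,1)$ law stochastically dominates the tilted law, not merely a comparison of means) and uniformity with the MLRP machinery already developed in Lemma~\ref{lemm:CondExp}. Either proof is complete and valid.
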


\begin{proof}
Let $T$ be the random variable with density function \[g_T(r) = \frac{r^{k-1} \exp\left(-\frac{r^2}{4\sigma^2}\right)}{\int_0^1 r^{k-1} \exp\left(-\frac{r^2}{4\sigma^2}\right) dr}\]
on $[0,1]$, observing that 
\[E[T] = d_k^{\text{AI}}(\infty) = \frac{\int_0^1 r^{k} \exp\left(-\frac{r^2}{4\sigma^2}\right) dr}{\int_0^1 r^{k-1} \exp\left(-\frac{r^2}{4\sigma^2}\right) dr}.\]
Consider also a Beta$(k,1)$ random variable $B$, whose density function on $[0,1]$ is
$
g_{B}(r) = k r^{k-1}$ and expectation is 
$
E[B] = \frac{k}{k+1}.
$ The likelihood ratio 
\[
\frac{g_{T}(r)}{g_{B}(r)} = \frac{1}{kr^{k-1}} \frac{r^{k-1}\exp\left(-\frac{r^{2}}{4\sigma^{2}}\right)}{\int_{0}^{1} r^{k-1}\exp(-r^{2}/4\sigma^{2})dr}
= 
\frac{1}{k}\frac{\exp(-r^{2}/4\sigma^{2})}
{ \int_{0}^{1} r^{k-1}\exp(-r^{2}/4\sigma^{2})dr}.
\]
is strictly decreasing in $r$, implying that $\left\{\frac{g_{T}(r)}{g_{B}(r)}\right\}_{r\geq 0}$ has the monotone likelihood ratio property. Thus the distribution of $B$ first-order stochastically dominates the distribution of $T$, implying in particular that $E[T]<E[B]$, or equivalently,
\[\frac{\int_0^1 r^{k} \exp\left(-\frac{r^2}{4\sigma^2}\right) dr}{\int_0^1 r^{k-1} \exp\left(-\frac{r^2}{4\sigma^2}\right) dr} < \frac{k}{k+1}\]
as desired.\end{proof}

\subsection{Proof of Proposition \ref{prop:Finite}} \label{proof:Finite}

By Lemma \ref{lemm:RealLife}, 
\[d^{\text{IP}}_k(m) = \frac{1}{k} B\left(\frac1k, m+1\right)\]
For every fixed $k$, this expression converges to zero as $m$ grows large. Since the quantity
$\mathbb{E}\left[ \| X_i \| : \|Y_{i0} - Y_{0i} \|=0\right]$ is strictly positive, 
we can identify a finite $m$ such that
\[d^{\text{IP}}_k(m) <\mathbb{E}\left[ \| X_i \| : \|Y_{i0} - Y_{0i} \|=0\right].\]
Then by Corollary \ref{corr:Infinite}, 
\[d^{\text{IP}}_k(m)  < d^{\text{AI}}_k(n) \quad \forall n \in \mathbb{Z}_+\]
implying that the AI equivalent sample size exists and is finite.

\subsection{Proof of Theorem \ref{thm:2}} \label{proof:2}

By Lemma \ref{lemm:BoundRL},
\[d^{\text{IP}}_k(2) = \frac{k}{k+1} - \frac{k}{(2k+1)(k+1)}\]
where $\frac{k}{(2k+1)(k+1)} = \Theta\left(\frac1k\right)$. That is, $d^{\text{IP}}_k(2)$ is smaller than $ \frac{k}{k+1}$ by a quantity that is on the order of $1/k$. By Lemma \ref{lemm:AI},
\[d^{\text{AI}}_k(\infty) = \frac{k}{k+1} + o\left(\frac{1}{k}\right)\]
i.e., the difference between $d^{\text{AI}}_k(\infty) $ and $k/(k+1)$ vanishes faster than $1/k$ as $k$ grows large. 
Moreover by Corollary \ref{corr:Infinite}, $d^{\text{AI}}_k(\infty)$ is a uniform lower bound on $d^{\text{AI}}_k(n)$ for all $n$. Thus when $k$ is sufficiently large, we have
\[d^{\text{IP}}_k(2) < d^{\text{AI}}_k(n)\]
for all $n \in \mathbb{Z}_+$, so $m_k^* \leq 2$. Since also 
\[d^{\text{IP}}_k(1) = \mathbb{E}[\|X_1\|] = d^{\text{AI}}_k(1)\]
we have $m_k^* >1$. Thus $m_k^*=2$ is the AI equivalent sample size for $k$ sufficiently large.

\section{Proofs of Results in Section \ref{sec:Supplementary}}

\subsection{Proof of Proposition \ref{prop:Noise}}

For each individual $i$ in group $g$, define
\[S^g_i = \|Y^g_{i0}-Y^g_{0i}\| = \|X_i^g + Z^g_i\| \quad \mbox{where} \quad Z^g_i  = \varepsilon^g_{i0}-\varepsilon^g_{0i}\]
to be the distance between this individual's AI clone $Y^g_{i0}$ and Subject's AI clone $Y^g_{0i}$ in their interaction.  
Recall that $X_i^g$ is drawn from a uniform distribution on the unit ball while $Z^g_i \sim \mathcal{N}(0,\nu_g I_k)$, where $\nu_R \equiv \sigma_R^2+\sigma_0^2$ and $\nu_P \equiv \sigma_P^2 + \sigma_0^2$. So the random variable 
\[Y^g_{i0}-Y^g_{0i} = X_i^g + Z_i^g\] admits a density $f_k^g$ that is locally Lipschitz around the origin. That is, there is an $L>0$ such that
\[\vert f_k^g(x) - f_k^g(0) \vert \leq L \| x\|\]
for all $x$ sufficiently close to the origin. Integrating this Lipschitz bound over the ball of small radius $s$, we obtain
\begin{align*}
    \left\vert \int_{\| x \| \leq s} (f_k^g(x) - f_k^g(0)) dx \right\vert & \leq \int_{\| x \| \leq s} \left\vert f_k^g(x) - f_k^g(0) \right\vert dx \\
    & \leq \int_{\| x \| \leq s} L \| x \| dx = O(s^{k+1})
\end{align*}
Thus 
\[\int_{\| x \| \leq s} f_k^g(x) dx = \int_{\| x \| \leq s} f_k^g(0) dx + O(s^{k+1})\]
or equivalently,
\begin{equation} \label{eq:Fg}
H_g(s) = f_k^g(0) V_{1,k}s^k + O(s^{k+1})
\end{equation}
where $H_g$ denotes the CDF of $S_i^g$, and $V_{1,k}=\frac{\pi^{k/2}}{\Gamma(\frac{k}{2}+1)}$ is the volume of the unit ball. 

Now let $M_n^g = \min_{1\leq i \leq n} S_i^g$ be the smallest distance to Subject among individuals in group $g$. Since $S_g^i$ are iid,
\[P\left(M_n^g > s\right) = (1-H_g(s))^n,\]
and plugging in (\ref{eq:Fg}) yields
\[P\left(M_n^g > s\right)= \left(1-f_k^g(0) V_{1,k} s^k + O(s^{k+1})\right)^n.\]
Consider an arbitrary $r > 0$ and set $s_n = \left(\frac{r}{n}\right)^{1/k}$. Then as $n$ grows large,
\[P\left(M_n^g > s_n\right) = \left(1-f_k^g(0) V_{1,k}\left(\frac{r}{n}\right) + O\left( \left(\frac{1}{n}\right)^{(k+1)/k}\right)\right)^n \longrightarrow e^{-\lambda_g r}\]
where $\lambda_g \equiv f_k^g(0)V_{1,k}$.
If we define the scaled random variable $Y_n^g = n (M_n^g)^k$, then equivalently
\[P(Y_n^g > r) \longrightarrow e^{-\lambda_g r} \quad \mbox{as $n$ grows large}\]
So $Y_n^g \rightarrow Y^g$ in distribution, where the limiting random variable $Y^g$ follows an exponential distribution with rate parameter $\lambda_g$. Since moreover $(Y_n^R,Y_n^P)$ are independent for each $n$, and the limiting random variables $(Y^R, Y^P)$ are independent, 
\begin{align*}
    P(Y_n^R < Y_n^P) \longrightarrow  P(Y^R < Y^P) = \frac{\lambda_R}{\lambda_R + \lambda_P}
\end{align*}
where the final equality is a well-known comparison for the hit rate of two independent exponential random variables. Since the latter expression reduces to $\frac{f_k^R(0)}{f_k^R(0) + f_k^P(0)}$, and the event $\{Y_n^R < Y_n^P\}$ is identical to the event $\{M_n^R < M_n^P\}$, we have the first part of the desired result.

It remains to show that $f_k^R(0)>f_k^P(0)$. Using the  parametric distributions of $X_i^g$ and $Z_g^i$, we can analytically derive the density at zero to be
\begin{equation} \label{eq:pg}
f_k^g(0) = \int_{x \in B_k} \frac{1}{V_{1,k}} \frac{1}{(2\pi\nu_g)^{k/2}} \exp\left(-\frac{\|x\|^2}{2\nu_g}\right)dx >0
\end{equation}
Moving to spherical coordinates, 
\[\int_{x \in B_k} \exp\left(-\frac{\|x\|^2}{2\nu_g}\right)dx
= |S^{k-1}| \int_{0}^{1} r^{k-1} \exp\left(-\frac{r^2}{2\nu_g}\right)dr\]
where
$|S^{k-1}|=\frac{2 \pi^{k/2}}{\Gamma(\frac{k}{2})}$ is the surface area of the unit sphere. 
Further performing the change of variable $t = \frac{r^2}{2 \nu_g}$ yields
\begin{align*}\int_{x \in B_k} \exp\left(-\frac{\|x\|^2}{2 \nu_g}\right)dx & = |S^{k-1}| \frac{(2\nu_g)^{k/2}}{2} \int_{0}^{\frac{1}{2\nu_g}} t^{\frac{k}{2}-1}e^{-t}dt 
\end{align*}
Plugging this into (\ref{eq:pg}) and noting that $\Gamma\left(\frac{k}{2}+1\right)=\frac{k}{2}\Gamma\left(\frac{k}{2}\right)$, we obtain
\begin{equation} \label{eq:fkg0}
f_k^g(0) = \frac{k}{2 \pi^{k/2}} \int_0^{\frac{1}{2\nu_g}} t^{\frac{k}{2}-1} e^{-t}dt
\end{equation}
which is clearly monotonically decreasing in $\nu_g$. Thus $\nu_R < \nu_P$ implies $f_k^R(0) > f_k^P(0)$, as desired.

\subsection{Proof of Corollary \ref{prop:LimitNoiseLargek}}

From Proposition \ref{prop:Noise}, we know that the limiting probability as the population grows large is
\[\lim_{n\rightarrow \infty} P(g_{n,k}^*=R) = \frac{f^R_k(0)}{f^R_k(0) + f^P_k(0)}. \]
Moreover by (\ref{eq:fkg0}), 
\[\frac{f_k^R(0)}{f_k^P(0)} = \frac{\int_0^{\frac{1}{2\nu_R}} t^{\frac{k}{2}-1}e^{-t}dt}{\int_0^{\frac{1}{2\nu_P}} t^{\frac{k}{2}-1}e^{-t}dt}\]
Observe that
\[e^{-\frac{1}{2\nu_g}} \frac{2}{k}\left(\frac{1}{2\nu_g}\right)^{\frac{k}{2}} \leq \int_0^{\frac{1}{2\nu_g}} t^{\frac{k}{2}-1}e^{-t}dt  \leq \frac{2}{k}\left(\frac{1}{2\nu_g}\right)^{\frac{k}{2}}  \]
since $e^{-\frac{1}{2\nu_g}} \leq e^{-t} \leq 1$ on the domain $t\in [0,1/(2\nu_g)]$. Thus
\[\frac{\int_0^{\frac{1}{2\nu_R}} t^{\frac{k}{2}-1}e^{-t}dt}{\int_0^{\frac{1}{2\nu_P}} t^{\frac{k}{2}-1}e^{-t}dt} \geq \frac{e^{-\frac{1}{2\nu_R}} \frac{2}{k}\left(\frac{1}{2\nu_R}\right)^{\frac{k}{2}}}{\frac{2}{k}\left(\frac{1}{2\nu_P}\right)^{\frac{k}{2}}} = e^{-\frac{1}{2\nu_R}} \left(\frac{\nu_P}{\nu_R}\right)^{\frac{k}{2}}\]
which converges to $\infty$ as $k$ grows large, since $\nu_P/\nu_R = (\sigma_0^2+\sigma_P^2)/(\sigma_0^2 + \sigma_R^2)>1$ by assumption that $\sigma_R^2 < \sigma_P^2$.

\subsection{Proof of Corollary \ref{prop:LimitNoiseLargeVar}}

By (\ref{eq:pg}),
\[\frac{f^R_k(0)}{f^P_k(0)} =\left(\frac{\sigma_{0,m}^2+\sigma_{P,m}^2}{\sigma_{0,m}^2+\sigma_{R,m}^2}\right)^{k/2} \cdot \frac{\int_{B_k}\exp\!\left(-\frac{\|x\|^2}{2(\sigma_{0,m}^2+\sigma_{R,m}^2)}\right)\,dx}{\int_{B_k}\exp\!\left(-\frac{\|x\|^2}{2(\sigma_{0,m}^2+\sigma_{P,m}^2)}\right)\,dx}.
\]
By assumption that $\sigma_{P,m}^2/\sigma_{R,m}^2 \rightarrow \infty$ as $m \to \infty$ while $\sigma_{0,m}^2$ is fixed, we have
\[\left(\frac{\sigma_{0,m}^2+\sigma_{P,m}^2}{\sigma_{0,m}^2+\sigma_{R,m}^2}\right)^{k/2} 
\longrightarrow \infty\]
as $m \rightarrow \infty$. Next let
\[D_m \equiv \int_{B_k} \exp\!\left(-\frac{\|x\|^2}{2(\sigma_{0,m}^2+\sigma_{P,m}^2)}\right)\,dx
\]
be the integral in the denominator.  Since $\sigma_{P,m}^2 \to \infty$ and $\sigma_{0,m}^2 \ge 0$, we have
$\sigma_{0,m}^2+\sigma_{P,m}^2 \to \infty$. Hence, for each fixed $x \in B_k$,
\[
\exp\!\left(-\frac{\|x\|^2}{2(\sigma_{0,m}^2+\sigma_{P,m}^2)}\right)
\longrightarrow 1
\]
as $m \rightarrow \infty$. Moreover, for all $m$ and all $x \in B_k$,
\[ 0 \le \exp\!\left(-\frac{\|x\|^2}{2(\sigma_{0,m}^2+\sigma_{P,m}^2)}\right) \le 1 ,
\]
and the constant function $1$ is integrable over the bounded set $B_k$. Thus by the Dominated Convergence Theorem,
\[D_m \longrightarrow \int_{B_k} 1 \, dx = V_{1,k},\]
where $V_{1,k}$ denotes the volume of the unit ball in $\mathbb{R}^k$. In particular, $D_m$ is bounded above and bounded away from zero for all sufficiently large $m$.

Now define
\[
N_m \equiv \int_{B_k}
\exp\!\left(-\frac{\|x\|^2}{2(\sigma_{0,m}^2+\sigma_{R,m}^2)}\right)\,dx 
\]
to be the integral in the numerator. By assumption,
\[\lim_{m \to \infty} N_m = c > 0 ,\]
and hence $\liminf_{m \to \infty} N_m \ge c > 0$.

Combining these observations,
\[\liminf_{m \to \infty} \frac{N_m}{D_m} = \frac{\liminf_{m \to \infty} N_m}{\lim_{m \to \infty} D_m} \ge \frac{c}{V_{1,k}} > 0 .
\]
Therefore,
\[ \frac{f^R_k(0)}{f^P_k(0)} = \left(\frac{\sigma_{0,m}^2+\sigma_{P,m}^2}{\sigma_{0,m}^2+\sigma_{R,m}^2}\right)^{k/2}  \cdot \frac{N_m}{D_m}
\longrightarrow \infty\]
as $m \rightarrow \infty$.  The desired result then follows from Proposition~\ref{prop:Noise}.

\section{Additional Results}

Section \ref{app:Monotonen} proves a result that says that Subject's expected payoff in the AI representation regime is monotonically increasing in $n$, the number of candidates. Section \ref{app:Noise} proves a result that says that the expected distance between Subject and Subject's AI best match (in the infinite sample benchmark) converges to zero as $\sigma \rightarrow 0$. 

\subsection{Monotonicity in $n$} \label{app:Monotonen}

\begin{lemma} \label{lemm:Monotone} Subject's expected payoff in the AI representation regime is monotonically increasing in $n$; that is, $d^{\text{AI}}_k(n+1) < d^{\text{AI}}_k(n)$ for every $n \in \mathbb{Z}_+$.
\end{lemma}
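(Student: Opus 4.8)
The plan is to show that adding a candidate can only (weakly, and in fact strictly) decrease the expected true distance to Subject's match, using the monotone-likelihood-ratio relationship already established in Lemma \ref{lemm:CondExp}. Write $S_i \equiv \|Y_{i0}-Y_{0i}\|$ for the clone-distance of candidate $i$, and recall from the model that the pairs $(X_i, S_i)$ are i.i.d.\ across $i$. The match selected by the platform among the first $n$ candidates is the one attaining $W_n \equiv \min_{1\le i \le n} S_i$, and by the argument in Corollary \ref{corr:Infinite} we have the representation
\[
d^{\text{AI}}_k(n) = \mathbb{E}\bigl[\,\psi(W_n)\,\bigr], \qquad \psi(s) \equiv \mathbb{E}\bigl[\|X_i\| : S_i = s\bigr],
\]
where $\psi$ is \emph{increasing} in $s$ by Lemma \ref{lemm:CondExp}. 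The key structural fact is that $\psi$ depends only on the realized minimum clone-distance: conditional on which candidate attains the minimum and on the value of that minimum, the true norm $\|X_{i^*}\|$ of the winner has the same conditional law as $\|X_i\|$ given $S_i = s$, by the i.i.d.\ structure and symmetry across indices.

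From here the argument reduces to a monotone comparison of $W_n$ and $W_{n+1}$. First I would observe that $W_{n+1} = \min(W_n, S_{n+1}) \le W_n$ pointwise, so $W_{n+1}$ is first-order stochastically dominated by $W_n$ (i.e.\ $W_{n+1} \preceq_{\text{FOSD}} W_n$); concretely, $P(W_{n+1} > s) = P(W_n > s)\,P(S_{n+1} > s) \le P(W_n > s)$ for every $s$, with strict inequality on a set of positive measure since $P(S_{n+1}>s)<1$ for all $s>0$. Because $\psi$ is increasing, applying the expectation of an increasing function to a FOSD-ordered pair yields
\[
d^{\text{AI}}_k(n+1) = \mathbb{E}\bigl[\psi(W_{n+1})\bigr] \le \mathbb{E}\bigl[\psi(W_n)\bigr] = d^{\text{AI}}_k(n).
\]
To upgrade this to the strict inequality claimed, I would note that $\psi$ is \emph{strictly} increasing (MLRP in Lemma \ref{lemm:CondExp} gives strict monotonicity of the conditional expectation, not merely weak), and that the FOSD domination is strict on an interval where $W_n$ places positive mass; together these force $\mathbb{E}[\psi(W_{n+1})] < \mathbb{E}[\psi(W_n)]$.

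The main obstacle is justifying the representation $d^{\text{AI}}_k(n) = \mathbb{E}[\psi(W_n)]$ rigorously — that is, verifying that the conditional expected norm of the \emph{selected} winner equals $\psi$ evaluated at the winning clone-distance, with no extra selection bias beyond the value of $W_n$ itself. This requires the exchangeability/i.i.d.\ argument: conditioning on the identity of the argmin and on $W_n = s$, the winner's true norm is distributed as $\|X\|$ given $S = s$, independently of how the losing candidates' distances are arranged. Once this representation is in hand, the stochastic-dominance step is routine. A secondary technical point is ensuring $\psi$ is integrable against the laws of $W_n$ and $W_{n+1}$, which is immediate since $\|X_i\| \le 1$ almost surely and hence $0 \le \psi \le 1$.
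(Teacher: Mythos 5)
Your proposal is correct and follows essentially the same route as the paper: both arguments rest on Lemma \ref{lemm:CondExp} together with the exchangeability observation that the selected match's expected true norm is the increasing function $m(s) = \mathbb{E}\left[\|X_i\| : S_i = s\right]$ evaluated at the winning clone distance, and then compare $\min_{1\le i\le n} S_i$ with $\min_{1\le i\le n+1} S_i$ --- the paper via an explicit coupling and decomposition over the event $\{S_{n+1} < S_{i^*_n}\}$, you via first-order stochastic dominance of the minima, which is the same argument in different packaging. One small point in your favor: you explicitly upgrade to the strict inequality asserted in the statement (using strictness of the MLRP and the fact that the new draw beats the incumbent minimum with positive probability), whereas the paper's written proof only delivers the weak inequality $d^{\text{AI}}_k(n+1) \leq d^{\text{AI}}_k(n)$.
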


\begin{proof}
For each $i$ define
\[S_i \equiv \| Y_{i0} - Y_{0i} \| = \|X_i +\varepsilon_{i0} - \varepsilon_{0i} \| = \|X_i + Z_i \| \]
where $Z_i \sim \mathcal{N}(0,2\sigma^2I_k)$. Consider a single probability space on which the infinite sequence $\{(X_i,S_i)\}_{i=1}^\infty$ is defined, noting that this tuple is independent across $i$. On this space, the random variable
\[i^*_n = \argmin_{1\leq i\leq n} S_i\]
is well-defined for every $n$. Moreover we can write
\begin{align*}
    d^{AI}_k(n+1) & = \mathbb{E}\left[\| X_{n+1} \| \cdot \mathbbm{1}( S_{n+1} < S_{i^*_n}) + \|X_{i^*_n}\| \cdot \mathbbm{1}( S_{n+1} \geq S_{i^*_n})\right] \\
    d^{\text{AI}}_k(n) & = \mathbb{E}\left[\|X_{i^*_n}\| \cdot \mathbbm{1}( S_{n+1} < S_{i^*_n}) + \|X_{i^*_n}\| \cdot\mathbbm{1}( S_{n+1} \geq S_{i^*_n})\right]
\end{align*}
so (by linearity of expectation) it is sufficient to show 
\[\mathbb{E}\left[\|X_{n+1}\| \cdot \mathbbm{1}( S_{n+1} < S_{i^*_n})\right] \leq \mathbb{E}\left[\|X_{i^*_n}\|  \cdot \mathbbm{1}( S_{n+1} < S_{i^*_n}) \right]\]
or equivalently 
\[\mathbb{E}\left[\| X_{n+1} \| : S_{n+1} < S_{i^*_n} \right] \leq \mathbb{E}\left[ \| X_{i^*_n} \| : S_{n+1} < S_{i^*_n} \right] \]

Conditional on $S_{n+1}$, the variable $X_{n+1}$ is independent of the sequence $\{(X_i,S_i)\}_{i=1}^n$. Since $S_{i^*_n}$ is a measurable function of $\{(X_i,S_i)\}_{i=1}^n$, also $X_{n+1} \indep S_{i^*_n} \mid S_{n+1}$. Thus 
\begin{equation} \label{eq:m1}
E\left[\| X_{n+1}\| : S_{n+1},S_{i_n^*}\right] = \mathbb{E}\left[\| X_{n+1} \| : S_{n+1}\right] =: m(S_{n+1})
\end{equation}
So also
\begin{align*}
    E\left[\| X_{n+1} \| : S_{n+1} < S_{i_n^*}\right] & =  E\left[ E\left[\| X_{n+1} \| :  S_{n+1},S_{i_n^*}\right] : S_{n+1} < S_{i_n^*}\right] && \mbox{ by L.I.E.}\\
    & = E\left[m(S_{n+1}) \mid S_{n+1} < S_{i^*_n}\right] && \mbox{ by (\ref{eq:m1})}
\end{align*}
Similarly,
\begin{equation} \label{eq:m2}
E\left[\| X_{i_n^*} \| : S_{n+1},S_{i_n^*}\right] = \mathbb{E}\left[\| X_{i_n^*} \| : S_{i_n^*}\right] =: m(S_{i_n^*})
\end{equation}
and
\begin{align*}
E\left[ \| X_{i_n^*} \| : S_{n+1} < S_{i_n^*}\right] & =  E\left[ E\left[X_{i_n^*} \mid S_{n+1},S_{i_n^*}\right] \mid S_{n+1} < S_{i_n^*}\right] \\
& = E\left[m(S_{i_n^*}) \mid S_{n+1} < S_{i^*_n}\right]
\end{align*}
Finally, observe that on the event $\{S_{n+1} < S_{i^*_n}\}$ we have $S_{n+1}(\omega) < S_{i^*_n}(\omega)$ pointwise for every $\omega \in \{S_{n+1} < S_{i^*_n}\}$. Since by Lemma \ref{lemm:CondExp} the function $m(\cdot)$ is monotonically increasing,
\[m(S_{n+1}(\omega)) \leq m(S_{i_n^*}(\omega))\]
also holds pointwise on $\{S_{n+1} < S_{i^*_n}\}$. This inequality is thus preserved by taking an expectation on $\{S_{n+1} < S_{i^*_n}\}$, i.e.,
\[E\left[m(S_{n+1}) \mid S_{n+1} < S_{i^*_n}\right] \leq E\left[m(S_{i_n^*}(\omega) \mid S_{n+1} < S_{i^*_n}\right].\]
Finally, by (\ref{eq:m1}) and (\ref{eq:m2}), this is equivalent to the desired statement
  \[ E\left[\| X_{n+1} \| : S_{n+1} < S_{i_n^*}\right] \leq  E\left[\| X_{i_n^*} \| : S_{n+1} < S_{i_n^*}\right].\]
  \end{proof}

\subsection{Limit as $\sigma\rightarrow 0$}
\label{app:Noise}
Define
\[
I_1(\sigma) := \int_0^1 r^{k} \exp\left(-\frac{r^{2}}{4\sigma^{2}}\right)dr
\quad\text{and}\quad
I_2(\sigma) := \int_0^1 r^{k-1} \exp\left(-\frac{r^{2}}{4\sigma^{2}}\right)dr.
\]
Apply the change of variable 
$t= \frac{r^2}{4 \sigma^2}$ to obtain
\[I_1(\sigma)= 2^{k}\sigma^{k+1}
\int_{0}^{1/(4\sigma^2)} t^{\frac{k-1}{2}}e^{-t}dt = 2^{k}\sigma^{k+1} \gamma \left(\frac{k+1}{2},\frac{1}{4\sigma^2}\right),
\]
where 
$
\gamma(s,x) = \int_{0}^{x} t^{s-1}e^{-t}dt$ is the lower incomplete gamma function.
The same substitution for $I_2(\sigma)$ yields
\[
I_2(\sigma) = 2^{k-1}\sigma^{k} \int_{0}^{1/(4\sigma^2)} t^{\frac{k}{2}-1}e^{-t}dt= 2^{k-1}\sigma^{k} \gamma\left(\frac{k}{2},\frac{1}{4\sigma^2}\right).
\]
From the above identities, we have
\[
\frac{I_1(\sigma)}{I_2(\sigma)} = 2\sigma \frac{\gamma\left(\frac{k+1}{2},\frac{1}{4\sigma^2}\right)}{\gamma\left(\frac{k}{2},\frac{1}{4\sigma^2}\right)}.
\]
Since for each fixed $s>0$, $
\lim_{x\to \infty} \gamma(s,x) = \Gamma(s)$ (where $\Gamma$ represents the Gamma function),
\[
\lim_{\sigma \rightarrow 0} \gamma\left(\frac{k+1}{2},\frac{1}{4\sigma^2}\right) = \Gamma\left(\frac{k+1}{2}\right) \quad \mbox{ and } \lim_{\sigma \rightarrow 0} \gamma\left(\frac{k}{2},\frac{1}{4\sigma^2}\right) = \Gamma\left(\frac{k}{2}\right) 
\]
Thus $\lim_{\sigma \rightarrow 0} \frac{I_1(\sigma)}{I_2(\sigma)} = 0 \cdot \frac{\Gamma \left(\frac{k+1}{2}\right)}{\Gamma\left(\frac{k}{2}\right)}= 0$, 
as desired.

\subsection{Sequential Search} \label{app:Sequential}

Time $t = 0, 1, 2, \dots$ is discrete. In each period $t\geq1$, an individual $X_t$ is drawn from a uniform distribution on the unit ball $B_k$, and error terms $\eps_{t0}$ and $\eps_{0t}$ are drawn from $\mathcal{N}(0,\sigma^2 I_k)$. All random variables are mutually independent. The state space is
$
\Omega = (B_k)^\infty \times \mathbb{R}^\infty \times \mathbb{R}^\infty$ with typical element
$
\omega = \left(\left(X_t\right)_{t=1}^\infty,\left(\eps_{t0}\right)_{t=1}^\infty,\left(\eps_{0t}\right)_{t=1}^\infty\right).$

At time $t=0$, Subject chooses between search in the in-person regime and the AI representation regime. If Subject chooses the in-person regime, then in each subsequent period $t$, Subject observes $X_t$ (corresponding to meeting individual $t$) and decides whether to stop and match with the closest individual so far or to continue searching. At the time of stopping $\tau$, Subject receives the payoff
\[
- \min\{\|X_1\|,\|X_2\|,\ldots,\|X_{\tau}\|\} - c_{\text{IP}}(\tau),
\]
where the in-person search cost $c_{\text{IP}}(\cdot)$ is an increasing function of $\tau$. 

If instead Subject chooses the AI representation regime, then Subject first pays a fixed cost $\kappa$ to use the AI-based platform. In each period $t$,  Subject observes the representation match quality
\[
S_t = \| X_t + \eps_{t0} - \eps_{0t}\|
\]
and decides whether to stop or continue. If Subject stops at time $\tau$, the realized payoff is
\[
 - \| X_{i^*(S_{\leq \tau})}\| - c_{\text{AI}}(\tau) - \kappa,
\]
where $S_{\leq t} = (S_1, \dots, S_t)$ is Subject's history at time $t$, and
$
i^*\left(S_{\leq \tau}\right) = 
\arg \min_{1 \leq i \leq \tau} S_i
$
denotes the individual prior to time $\tau$ whose AI representation is closest to Subject's AI representation. The AI platform search cost $c_{\text{AI}}(\cdot)$ is an increasing function of $\tau$. (It is reasonable to assume that $c_{\text{AI}}(\tau) \leq c_{\text{IP}}(\tau)$ for all $\tau$,  but this is not necessary for the results.)

Subject's strategy thus consists of an action $a \in \{\text{IP}, \text{AI}\}$ and a stopping rule $\tau_a$, which is a map $\Omega \to [0,\infty)$ that is measurable with respect to the observed histories in the chosen regime. Formally,  let $\mathcal{F}_t^{\text{IP}}$ denote the filtration generated by the in-person history $X_{\leq t} = (X_1,\dots,X_t)$, and let $\mathcal{F}_t^{\text{AI}}$ denote the filtration generated by the  AI history $S_{\le t}$.  Then if $a=\text{IP}$, the stopping rule must satisfy 
$\{\tau_a \leq t\} \in \mathcal{F}_t^{\text{IP}}$ for every $t$,  and  if $a=\text{AI}$, the stopping rule must satisfy
$\{\tau_a \leq t\} \in \mathcal{F}_t^{\text{AI}}$ for every $t$.

The following result says that when the number of dimensions is sufficiently large, and if the fixed cost of joining the AI platform exceeds the cost of searching over two individuals in person, then Subject optimally chooses the in-person regime.  

\begin{proposition} Suppose
$\kappa > c_{\mathrm{IP}}(2)$. Then for all $k$ sufficiently large, Subject optimally chooses $a=\text{IP}$ at period $t=0$.
\end{proposition}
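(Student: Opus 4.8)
The plan is to sandwich the two regimes: bound the optimal value of the AI regime from above, bound the optimal value of the in-person regime from below, and then invoke the asymptotic comparison already established for Theorem \ref{thm:2}. Write $V_{\text{IP}}$ and $V_{\text{AI}}$ for the optimal expected payoffs over all admissible stopping rules, conditional on choosing $a=\text{IP}$ and $a=\text{AI}$ respectively, so that Subject optimally chooses in-person whenever $V_{\text{IP}} > V_{\text{AI}}$. The in-person lower bound is immediate: the non-adaptive rule of stopping at $\tau=2$ is admissible, so by definition of $d^{\text{IP}}_k(2)$ we have $V_{\text{IP}} \geq -d^{\text{IP}}_k(2) - c_{\text{IP}}(2)$.

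The crux is the AI upper bound, and this is where I expect the main obstacle. I claim that for \emph{every} admissible stopping rule $\tau$ on the platform, the expected true distance to the selected match satisfies $\mathbb{E}[\|X_{i^*(S_{\leq\tau})}\|] \geq d^{\text{AI}}_k(\infty)$. The difficulty is that both $\tau$ and the selected index $i^* = \argmin_{1\leq i\leq\tau} S_i$ depend on the entire observed history $S_{\leq\tau}$, so one must rule out the possibility that adaptive stopping exploits correlations to beat the idealized perfect-clone benchmark. To handle this I mirror the conditioning argument in the proof of Lemma \ref{lemm:Monotone}. Setting $m(s) \equiv \mathbb{E}[\|X_i\| : S_i = s]$, the conditional independence of $X_i$ from $\{S_j\}_{j\neq i}$ (and from $\tau$) given $S_i$ yields $\mathbb{E}[\|X_{i^*}\| : S_{i^*}] = m(S_{i^*})$, and hence $\mathbb{E}[\|X_{i^*}\|] = \mathbb{E}[m(S_{i^*})]$. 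By Lemma \ref{lemm:CondExp} the map $m$ is increasing, and since $S_{i^*}\geq 0$ always holds, $m(S_{i^*}) \geq m(0) = d^{\text{AI}}_k(\infty)$ pointwise; taking expectations gives the claim. Because the search cost $c_{\text{AI}}(\cdot)$ is nonnegative and the fixed cost $\kappa$ is always incurred upon entry, it follows that $V_{\text{AI}} \leq -d^{\text{AI}}_k(\infty) - \kappa$.

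Combining the two bounds, it suffices to show that $-d^{\text{IP}}_k(2) - c_{\text{IP}}(2) > -d^{\text{AI}}_k(\infty) - \kappa$, equivalently $d^{\text{AI}}_k(\infty) - d^{\text{IP}}_k(2) > c_{\text{IP}}(2) - \kappa$. The right-hand side is a fixed negative constant by the hypothesis $\kappa > c_{\text{IP}}(2)$, while the left-hand side is strictly positive for all $k$ sufficiently large: this is exactly the content of Theorem \ref{thm:2} (obtained by comparing $d^{\text{IP}}_k(2) = \frac{k}{k+1} - \Theta(1/k)$ from Lemma \ref{lemm:BoundRL} against $d^{\text{AI}}_k(\infty) = \frac{k}{k+1} + o(1/k)$ from Lemma \ref{lemm:AI}). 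Hence $V_{\text{IP}} > V_{\text{AI}}$ for all large $k$, so Subject optimally chooses $a=\text{IP}$ at $t=0$. Note that the argument is robust in the sense that it never uses any property of $c_{\text{AI}}$ beyond nonnegativity, consistent with the proposition placing no restriction on the AI search cost.
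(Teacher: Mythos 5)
Your proposal is correct and follows essentially the same route as the paper's own proof: lower-bound the in-person value by the feasible stop-at-two rule, upper-bound the value of any adaptive stopping rule on the AI platform by $-d^{\text{AI}}_k(\infty)-\kappa$ via the conditioning argument through $m(s)=\mathbb{E}[\|X_i\| : S_i=s]$ and the monotonicity from Lemma \ref{lemm:CondExp}, and then conclude with the large-$k$ comparison $d^{\text{AI}}_k(\infty)\geq d^{\text{IP}}_k(2)$ from the proof of Theorem \ref{thm:2} together with $\kappa > c_{\text{IP}}(2)$. The only (immaterial) difference is presentational: you make the strictness of the final inequality slightly more explicit than the paper does.
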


\begin{proof} Since $(X_t,S_t)$ are i.i.d. across $t$, 
\[E\left[\|X_{i^*(S_{\leq t})}\| : S_{i^*(S_{\leq t})}\right] = E\left[\|X_{i^*(S_{\leq t})}\| : S_{\leq t}\right].\]
We have already shown in Lemma \ref{lemm:CondExp} that $E\left[\| X_{i^*(S_{\leq t})} \| \mid S_{i^*(S_{\leq t})}=s\right]$ is increasing in $s$, and thus
\[E\left[\|X_{i^*(S_{\leq t})}\| : S_{i^*(S_{\leq t})}=s\right] \geq d_k^{\text{AI}}(\infty) =  E\left[\| X_{i^*(S_{\leq t})}\| : S_{i^*(S_{\leq t})}=0\right]\]
So, pointwise for each realization of $\omega$, we have
\[
 - E\left[\| X_{i^*(\omega)}(\omega) \| : S_{i^*(\omega)}(\omega)\right]  \leq -d_k^{AI}(\infty)
\]
where $i^*(\omega) \equiv i^*(S_{\leq \tau}(\omega))$. 
This inequality is preserved by integrating over $\omega$; thus 
\[
 - E\left[\| X_{i^*(\omega)}(\omega) \|\right] = - E\left[E\left[\| X_{i^*(\omega)}(\omega) \| : S_{i^*(\omega)}(\omega)\right]\right] \leq -d_k^{AI}(\infty)
\]
So for every stopping rule $\tau_{AI}$ in the AI representation regime, we have
\begin{align*}
    -\mathbb{E}_{\tau_{AI}}\left[\| X_{i^*(S_{\leq \tau_{AI}})}\| + c_{\text{AI}}(\tau_{AI}) +  \kappa\right] & \leq -d_k^{\text{AI}}(\infty) - \kappa
\end{align*}
The proof of Theorem \ref{thm:2} established $d_k^{\text{AI}}(\infty) \geq d_k^{\text{IP}}(2)$ for $k$ sufficiently large. Moreover, $\kappa \geq c_{\text{IP}}(2)$ by assumption. Thus
\[-d_k^{\text{AI}}(\infty) - \kappa \leq -d_k^{\text{IP}}(2) - c_{\text{IP}}(2) \quad \mbox{ for $k$ sufficiently large}\]Finally observe that 
\[-d_k^{\text{IP}}(2) - c_{\text{IP}}(2) \leq \sup_{\tau_{\text{IP}}} \mathbb{E}\left[-\min\{\|X_1\|,\|X_2\|,\ldots,\|X_{\tau_{\text{IP}}}\|\} - c_{\text{IP}}(\tau_{\text{IP}})\right] \]
since one feasible stopping rule $\tau_{\text{IP}}$ is to sample twice and stop regardless of the realized history. Thus 
we have the desired result. 
 \end{proof}


\begin{thebibliography}{49}
\newcommand{\enquote}[1]{``#1''}
\expandafter\ifx\csname natexlab\endcsname\relax\def\natexlab#1{#1}\fi

\bibitem[\protect\citeauthoryear{Agarwal, Moehring, Rajpurkar, and
  Salz}{Agarwal et~al.}{2023}]{Agarwaletal2023}
\textsc{Agarwal, N., A.~Moehring, P.~Rajpurkar, and T.~Salz} (2023):
  \enquote{Combining Human Expertise with Artificial Intelligence: Experimental
  Evidence from Radiology,} Working Paper 31422, National Bureau of Economic
  Research.

\bibitem[\protect\citeauthoryear{Aher, Arriaga, and Kalai}{Aher
  et~al.}{2023}]{pmlr-v202-aher23a}
\textsc{Aher, G.~V., R.~I. Arriaga, and A.~T. Kalai} (2023): \enquote{Using
  Large Language Models to Simulate Multiple Humans and Replicate Human Subject
  Studies,} in \emph{Proceedings of the 40th International Conference on
  Machine Learning}, ed. by A.~Krause, E.~Brunskill, K.~Cho, B.~Engelhardt,
  S.~Sabato, and J.~Scarlett, PMLR, vol. 202 of \emph{Proceedings of Machine
  Learning Research}, 337--371.

\bibitem[\protect\citeauthoryear{Amitabh and Ansari}{Amitabh and
  Ansari}{2025}]{amitabh2025aihiring}
\textsc{Amitabh, U. and A.~Ansari} (2025): \enquote{Hiring with AI doesn’t
  have to be so inhumane,} \emph{World Economic Forum}, accessed 7 Jan 2026.

\bibitem[\protect\citeauthoryear{Angelova, Dobbie, , and Yang}{Angelova
  et~al.}{2022}]{Angelovaetal}
\textsc{Angelova, V., W.~Dobbie, , and C.~S. Yang} (2022): \enquote{Algorithmic
  Recommendations and Human Discretion,} Working Paper.

\bibitem[\protect\citeauthoryear{Athey}{Athey}{2002}]{athey2002monotone}
\textsc{Athey, S.} (2002): \enquote{Monotone Comparative Statics under
  Uncertainty,} \emph{The Quarterly Journal of Economics}, 117, 187--223.

\bibitem[\protect\citeauthoryear{Bardhi}{Bardhi}{2023}]{Bardhi}
\textsc{Bardhi, A.} (2023): \enquote{Attributes: Selective Learning and
  Influence,} Working Paper.

\bibitem[\protect\citeauthoryear{Bolton and Harris}{Bolton and
  Harris}{1999}]{BoltonHarris1999}
\textsc{Bolton, P. and C.~Harris} (1999): \enquote{Strategic Experimentation,}
  \emph{Econometrica}, 67, 349--374.

\bibitem[\protect\citeauthoryear{Callander}{Callander}{2011}]{callander2011searching}
\textsc{Callander, S.} (2011): \enquote{Searching for Good Policies,}
  \emph{American Political Science Review}, 105, 643--662.

\bibitem[\protect\citeauthoryear{Chade, Eeckhout, and Smith}{Chade
  et~al.}{2017}]{ChadeEeckhoutSmith2017}
\textsc{Chade, H., J.~Eeckhout, and L.~Smith} (2017): \enquote{Sorting through
  Search and Matching Models in Economics,} \emph{Journal of Economic
  Literature}, 55, 493--544.

\bibitem[\protect\citeauthoryear{Christiano, Leike, Brown, Martic, Legg, and
  Amodei}{Christiano et~al.}{2017}]{christiano2017deep}
\textsc{Christiano, P.~F., J.~Leike, T.~B. Brown, M.~Martic, S.~Legg, and
  D.~Amodei} (2017): \enquote{Deep Reinforcement Learning from Human
  Preferences,} in \emph{Advances in Neural Information Processing Systems
  (NIPS)}, 4302--4310.

\bibitem[\protect\citeauthoryear{Cornell and Welch}{Cornell and
  Welch}{1996}]{CornellWelch1996}
\textsc{Cornell, B. and I.~Welch} (1996): \enquote{Culture, Information, and
  Screening Discrimination,} \emph{Journal of Political Economy}, 104,
  542--571.

\bibitem[\protect\citeauthoryear{Dasaratha, Golub, and Hak}{Dasaratha
  et~al.}{2023}]{DasarathaGolubHak2023}
\textsc{Dasaratha, K., B.~Golub, and N.~Hak} (2023): \enquote{Learning from
  Neighbours about a Changing State,} \emph{Review of Economic Studies}, 90,
  2326--2369.

\bibitem[\protect\citeauthoryear{Demopoulos}{Demopoulos}{2024}]{demopoulos2024jobapplicants}
\textsc{Demopoulos, A.} (2024): \enquote{The job applicants shut out by AI:
  ‘The interviewer sounded like Siri’,} \emph{The Guardian}, accessed 7 Jan
  2026.

\bibitem[\protect\citeauthoryear{Ely}{Ely}{2011}]{Ely2011}
\textsc{Ely, J.~C.} (2011): \enquote{Kludged,} \emph{American Economic Journal:
  Microeconomics}, 3, 210--231.

\bibitem[\protect\citeauthoryear{Frick, Iijima, and Ishii}{Frick
  et~al.}{2024}]{FrickIijimaIshii2024}
\textsc{Frick, M., R.~Iijima, and Y.~Ishii} (2024): \enquote{Monitoring with
  Rich Data,} Working paper.

\bibitem[\protect\citeauthoryear{Fudenberg and Levine}{Fudenberg and
  Levine}{2022}]{FudenbergLevine2022}
\textsc{Fudenberg, D. and D.~K. Levine} (2022): \enquote{Intervention with
  limited information,} \emph{International Journal of Game Theory}, 51,
  379--389, published online 19 October 2021.

\bibitem[\protect\citeauthoryear{Gale and Shapley}{Gale and
  Shapley}{1962}]{GaleShapley1962}
\textsc{Gale, D. and L.~S. Shapley} (1962): \enquote{College Admissions and the
  Stability of Marriage,} \emph{The American Mathematical Monthly}, 69, 9--15.

\bibitem[\protect\citeauthoryear{Harel, Mossel, Strack, and Tamuz}{Harel
  et~al.}{2021}]{Harel2021}
\textsc{Harel, M., E.~Mossel, P.~Strack, and O.~Tamuz} (2021):
  \enquote{Rational Groupthink,} \emph{The Quarterly Journal of Economics},
  136, 621--668, published: 08 July 2020.

\bibitem[\protect\citeauthoryear{Horton}{Horton}{2023}]{Horton2023HomoSilicus}
\textsc{Horton, J.~J.} (2023): \enquote{Large Language Models as Simulated
  Economic Agents: What Can We Learn from \textit{Homo Silicus?}} NBER Working
  Paper 31122, National Bureau of Economic Research.

\bibitem[\protect\citeauthoryear{Iakovlev and Liang}{Iakovlev and
  Liang}{2024}]{IakovlevLiang}
\textsc{Iakovlev, A. and A.~Liang} (2024): \enquote{The Value of Context: Human
  versus Black Box Evaluators,} Working Paper.

\bibitem[\protect\citeauthoryear{Jabarian and Henkel}{Jabarian and
  Henkel}{2025}]{JabarianHenkel2025VoiceAI}
\textsc{Jabarian, B. and L.~Henkel} (2025): \enquote{Voice AI in Firms: A
  Natural Field Experiment on Automated Job Interviews,} Working paper.

\bibitem[\protect\citeauthoryear{Joel, Eastwick, Allison, Arriaga, Baker,
  Barr-Kalif, and et~al.}{Joel et~al.}{2020}]{Joel2020}
\textsc{Joel, S., P.~W. Eastwick, C.~J. Allison, X.~B. Arriaga, Z.~G. Baker,
  E.~Barr-Kalif, and et~al.} (2020): \enquote{Machine learning uncovers the
  most robust self-report predictors of relationship quality across 43
  longitudinal couples studies,} \emph{Proceedings of the National Academy of
  Sciences}, 117, 19061--19071.

\bibitem[\protect\citeauthoryear{Joel, Eastwick, and Finkel}{Joel
  et~al.}{2017}]{Joel2017}
\textsc{Joel, S., P.~W. Eastwick, and E.~J. Finkel} (2017): \enquote{Is
  Romantic Desire Predictable? Machine Learning Applied to Initial Romantic
  Attraction,} \emph{Psychological Science}, 28, 1478--1489.

\bibitem[\protect\citeauthoryear{Jung, Concannon, Shroff, Goel, and
  Goldstein}{Jung et~al.}{2017}]{jung2017simple}
\textsc{Jung, J., C.~Concannon, R.~Shroff, S.~Goel, and D.~G. Goldstein}
  (2017): \enquote{Simple rules for complex decisions,} Working Paper.

\bibitem[\protect\citeauthoryear{Keller, Rady, and Cripps}{Keller
  et~al.}{2005}]{KellerRadyCripps2005}
\textsc{Keller, G., S.~Rady, and M.~Cripps} (2005): \enquote{Strategic
  Experimentation with Exponential Bandits,} \emph{Econometrica}, 73, 39--68.

\bibitem[\protect\citeauthoryear{Klabjan, Olszewski, and Wolinsky}{Klabjan
  et~al.}{2014}]{KLABJAN2014190}
\textsc{Klabjan, D., W.~Olszewski, and A.~Wolinsky} (2014):
  \enquote{Attributes,} \emph{Games and Economic Behavior}, 88, 190--206.

\bibitem[\protect\citeauthoryear{Kleinberg, Lakkaraju, Leskovec, Ludwig, and
  Mullainathan}{Kleinberg et~al.}{2017}]{Kleinbergetal2017}
\textsc{Kleinberg, J., H.~Lakkaraju, J.~Leskovec, J.~Ludwig, and
  S.~Mullainathan} (2017): \enquote{{Human Decisions and Machine Predictions},}
  \emph{The Quarterly Journal of Economics}, 133, 237--293.

\bibitem[\protect\citeauthoryear{Liang and Mu}{Liang and
  Mu}{2020}]{LiangMu2020}
\textsc{Liang, A. and X.~Mu} (2020): \enquote{{Complementary Information and
  Learning Traps*},} \emph{The Quarterly Journal of Economics}, 135, 389--448.

\bibitem[\protect\citeauthoryear{Lin}{Lin}{2024}]{lin2024grindr}
\textsc{Lin, B.} (2024): \enquote{Grindr Aims to Build the Dating World’s
  First AI ‘Wingman’,} \emph{The Wall Street Journal}, accessed:
  2025-02-15.

\bibitem[\protect\citeauthoryear{Malladi}{Malladi}{2023}]{Malladi2023}
\textsc{Malladi, S.} (2023): \enquote{Searching in the Dark and Learning Where
  to Look,} Working paper, draft date: April 19, 2023.

\bibitem[\protect\citeauthoryear{McCall}{McCall}{1970}]{mccall1970economics}
\textsc{McCall, J.~J.} (1970): \enquote{Economics of Information and Job
  Search,} \emph{The Quarterly Journal of Economics}, 84, 113--126.

\bibitem[\protect\citeauthoryear{Mei, Xie, Yuan, and Jackson}{Mei
  et~al.}{2024}]{Mei2024Turing}
\textsc{Mei, Q., Y.~Xie, W.~Yuan, and M.~O. Jackson} (2024): \enquote{A Turing
  test of whether AI chatbots are behaviorally similar to humans,}
  \emph{Proceedings of the National Academy of Sciences}, 121, e2313925121.

\bibitem[\protect\citeauthoryear{Metz}{Metz}{2024}]{Metz2024AIDating}
\textsc{Metz, C.} (2024): \enquote{AI Is Changing Dating Apps, But Not in the
  Way You Think,} \emph{The New York Times}.

\bibitem[\protect\citeauthoryear{Milgrom}{Milgrom}{1981}]{Milgrom1981}
\textsc{Milgrom, P.~R.} (1981): \enquote{Good News and Bad News: Representation
  Theorems and Applications,} \emph{The Bell Journal of Economics}, 12,
  380--391.

\bibitem[\protect\citeauthoryear{Moscarini and Smith}{Moscarini and
  Smith}{2002}]{MoscariniSmith2002}
\textsc{Moscarini, G. and L.~Smith} (2002): \enquote{The Law of Large Demand
  for Information,} \emph{Econometrica}, 70, 2351--2366.

\bibitem[\protect\citeauthoryear{Park, Zou, Shaw, Hill, Cai, Morris, Willer,
  Liang, and Bernstein}{Park et~al.}{2024}]{Park2024GenerativeAgentSimulations}
\textsc{Park, J.~S., C.~Q. Zou, A.~Shaw, B.~M. Hill, C.~Cai, M.~R. Morris,
  R.~Willer, P.~Liang, and M.~S. Bernstein} (2024): \enquote{Generative Agent
  Simulations of 1,000 People,} \emph{arXiv}, accessed 2025-01-01.

\bibitem[\protect\citeauthoryear{Rajpurkar, Irvin, Zhu, Yang, Mehta, Duan,
  Ding, Bagul, Langlotz, Shpanskaya, Lungren, and Ng}{Rajpurkar
  et~al.}{2017}]{rajpurkar2017chexnet}
\textsc{Rajpurkar, P., J.~Irvin, K.~Zhu, B.~Yang, H.~Mehta, T.~Duan, D.~Ding,
  A.~Bagul, C.~Langlotz, K.~Shpanskaya, M.~P. Lungren, and A.~Y. Ng} (2017):
  \enquote{CheXNet: Radiologist-Level Pneumonia Detection on Chest X-Rays with
  Deep Learning,} Working Paper.

\bibitem[\protect\citeauthoryear{{Replicant}}{{Replicant}}{2024}]{Replicant2024}
\textsc{{Replicant}} (2024): \enquote{About Replicant: Revolutionizing
  recruitment through AI-powered candidate replication,}
  \url{https://replicant.im/about}, accessed 2025-01-01.

\bibitem[\protect\citeauthoryear{Rocha}{Rocha}{2025}]{rocha2025ai}
\textsc{Rocha, N.} (2025): \enquote{Welcome to Your Job Interview. Your
  Interviewer Is A.I.} \emph{The New York Times}, accessed 2025-01-06.

\bibitem[\protect\citeauthoryear{Roth and Sotomayor}{Roth and
  Sotomayor}{1992}]{RothSotomayor1992}
\textsc{Roth, A.~E. and M.~Sotomayor} (1992): \enquote{Two-Sided Matching,} in
  \emph{Handbook of Game Theory with Economic Applications}, Elsevier, vol.~1,
  485--541.

\bibitem[\protect\citeauthoryear{Sarkar}{Sarkar}{2025}]{Sarkar2025AIAgentsProductivity}
\textsc{Sarkar, S.~K.} (2025): \enquote{AI Agents, Productivity, and
  Higher-Order Thinking: Early Evidence from Software Development,} Preprint.

\bibitem[\protect\citeauthoryear{Shimer and Smith}{Shimer and
  Smith}{2000}]{ShimerSmith2000}
\textsc{Shimer, R. and L.~Smith} (2000): \enquote{Assortative Matching and
  Search,} \emph{Econometrica}, 68, 343--369.

\bibitem[\protect\citeauthoryear{Stigler}{Stigler}{1961}]{stigler1961economics}
\textsc{Stigler, G.~J.} (1961): \enquote{The Economics of Information,}
  \emph{Journal of Political Economy}, 69, 213--225.

\bibitem[\protect\citeauthoryear{Temkin}{Temkin}{2025}]{temkin2025alex}
\textsc{Temkin, M.} (2025): \enquote{AI recruiter Alex raises \$17M to automate
  initial job interviews,} Accessed January 7, 2026.

\bibitem[\protect\citeauthoryear{Urgun and Yariv}{Urgun and
  Yariv}{2024}]{UrgunYariv2024}
\textsc{Urgun, C. and L.~Yariv} (2024): \enquote{Contiguous Search: Exploration
  and Ambition on Uncharted Terrain,} \emph{Journal of Political Economy}, 132,
  published online December 18, 2024.

\bibitem[\protect\citeauthoryear{Vives}{Vives}{1992}]{Vives}
\textsc{Vives, X.} (1992): \enquote{How Fast do Rational Agents Learn?}
  \emph{Review of Economic Studies}, 60, 329--347.

\bibitem[\protect\citeauthoryear{Weitzman}{Weitzman}{1979}]{weitzman1979optimal}
\textsc{Weitzman, M.~L.} (1979): \enquote{Optimal Search for the Best
  Alternative,} \emph{Econometrica}, 47, 641--654.

\bibitem[\protect\citeauthoryear{Whittaker}{Whittaker}{2024}]{ForbesAustralia2024BumbleAIConcierge}
\textsc{Whittaker, R.} (2024): \enquote{Would you let AI date for you?
  Bumble’s founder thinks that could be the future,}
  \url{https://www.forbes.com.au/news/innovation/bumble-ai-dating-concierge/},
  forbes Australia, accessed January 2026.

\bibitem[\protect\citeauthoryear{Árpád Baricz}{Árpád
  Baricz}{2010}]{Baricz2010}
\textsc{Árpád Baricz} (2010): \enquote{Bounds for Modified Bessel Functions
  of the First and Second Kinds,} \emph{Proceedings of the Edinburgh
  Mathematical Society}, 53, 575--599.

\end{thebibliography}
\end{document}